\crefname{section}{Sect.}{Sect.}
\Crefname{section}{Section}{Sections}
\DeclareFontFamily{U}{MnSymbolC}{}
\DeclareSymbolFont{MnSyC}{U}{MnSymbolC}{m}{n}
\DeclareFontShape{U}{MnSymbolC}{m}{n}{
    <-6>  MnSymbolC5
   <6-7>  MnSymbolC6
   <7-8>  MnSymbolC7
   <8-9>  MnSymbolC8
   <9-10> MnSymbolC9
  <10-12> MnSymbolC10
  <12->   MnSymbolC12%
}{}
\DeclareMathSymbol{\powersetsym}{\mathord}{MnSyC}{180}
\newcommand*{\eg}{e.g.\@\xspace}
\newcommand*{\ie}{i.e.\@\xspace}
\newcommand*{\etc}{%
  \@ifnextchar{.}%
    {etc}%
    {etc.\@\xspace}%
}
\newcommand*{\cf}{%
  \@ifnextchar{.}%
    {cf}%
    {cf.\@\xspace}%
}
\newcommand*{\etal}{%
  \@ifnextchar{.}%
    {et~al}%
    {et~al.\@\xspace}%
}
\NewDocumentCommand{\twopartdef}{ m m m o }%
{
  \left\{
    \begin{array}{ll}
      #1\quad & \mbox{if } #2 \\
      #3\quad & \IfValueTF{#4}{\mbox{if } #4}{\mbox{otherwise}}
    \end{array}
  \right.
}
\NewDocumentCommand{\threepartdef}{ m m m m m o }%
{
  \left\{
    \begin{array}{ll}
      #1\quad & \mbox{if } #2 \\
      #3\quad & \mbox{if } #4 \\
      #5\quad & \IfValueTF{#6}{\mbox{if } #6}{\mbox{otherwise}}
    \end{array}
  \right.
}
\NewDocumentCommand{\dist}{ o m m }%
{
\mathit{dist}(#1,#2)%
}
\newcommand*{\real}{\mathds{R}}
\newcommand*{\rplus}{\real_{\geq 0}}
\newcommand*{\rn}{\real^n}
\renewcommand*{\natural}{\mathds{N}}
\newcommand*{\interval}{\mathcal{I}}
\renewcommand*{\emptyset}{\varnothing}
\newcommand*\tracecompliant{\triangleright}
\renewcommand*{\epsilon}{\varepsilon}
\renewcommand*{\phi}{\varphi}
\newcommand*{\from}{\colon}
\newcommand*\convex[2]{%
  \mathit{conv}\ifthenelse{\equal{#1}{}}{}{({#1},{#2})}%
}
\newcommand*\Convex[1]{%
  \mathit{Conv}\ifthenelse{\equal{#1}{}}{}{({#1})}%
}
\DeclareMathOperator{\indeg}{indeg}
\DeclareMathOperator{\outdeg}{outdeg}
\newcommand*\bigO[1]{\mathcal{O}(#1)}
\newcommand*\manmerge{{\merge^{\kern-.2em\ast}}}
\newcommand*\given{\mid}
\newcommand*\guardspec{{\Gamma^\spec}}
\newcommand*\godtrace{{\widetilde\trace}}
\newcommand*\godtraces{{\widetilde\traces}}
\newcommand*\chull[1]{\mathit{chull}({#1})}
\DeclareMathOperator{\solve}{solve}
\NewDocumentCommand\projection{m O{\godtraces}}{\left.{#1}\right\vert_{{#2}}}
\newcommand*\dimaccess[2]{\left({#1}\right)_{{#2}}}
\newcommand*{\flow}{\ensuremath{\mathit{flow}}\xspace}
\newcommand*\commitpartition[2]{{{#1}\kern-.15em\downarrow_{#2}}}
\NewDocumentCommand\representative{O{} m}{ \left\llbracket {#2} \right\rrbracket_{#1} }
\newcommand*\rtlola{\textsc{RTLola}\xspace}
\newcommand*\aut{\ensuremath{\mathcal{H}}\xspace}
\NewDocumentCommand{\constraut}{ s D<>{\traces} D<>{\spec} }{ 
  \IfBooleanTF{#1}
  { \aut^+_{{#2},{#3}} }
  { \aut^+ }
}
\newcommand*\finaut{\ensuremath{\mathcal{A}}\xspace}
\newcommand*\trace{\ensuremath{\pi}\xspace}
\newcommand*\traces{\ensuremath{\Pi}\xspace}
\newcommand*\spec{\ensuremath{\Phi}\xspace}
\newcommand*\mode{\ensuremath{\mu}\xspace}
\newcommand*\sstate{\ensuremath{s}\xspace}
\newcommand*\modes{\ensuremath{M}\xspace}
\newcommand*\edges{\ensuremath{E}\xspace}
\newcommand*\edge{\ensuremath{e}\xspace}
\newcommand*\guard{\ensuremath{\gamma}\xspace}
\newcommand*\actions{\ensuremath{\Lambda}\xspace}
\newcommand*\action{\ensuremath{\lambda}\xspace}
\newcommand*\partition{\ensuremath{\mathcal{P}}\xspace}
\newcommand*\eqclass{\ensuremath{\zeta}\xspace}
\newcommand*\initialstate{\ensuremath{\initial{s}}\xspace}
\newcommand*\initial[1]{#1_I}
\newcommand*\delay{\ensuremath{\delta}\xspace}
\newcommand*\neutralelem{\mathds{1}}
\newcommand*\modemap{\ensuremath{\psi}\xspace}
\newcommand*\bisim{\approx}
\newcommand*\discretebisim{\mathrel{\bisim}}  
\newcommand*\actionsim{\mathrel{\sim_{\exists\action}}}
\newcommand*\termsim{\mathrel{\sim_{\bot}}}
\newcommand*\mergesim{\mathrel{\sim_{M}}}
\newcommand*\specrel{\mathrel{\sim_{\spec}}}
\newcommand*\reconrel{\mathrel{\sim_{+}}}
\newcommand*\refinerel{\mathrel{\sim_{\alpha}}}
\NewDocumentCommand{\Set}{ s m }{
  \IfBooleanTF{#1}%
  {\{#2\}}%
  {\left\{{#2}\right\}}%
}
\NewDocumentCommand{\card}{ s m }{
  \IfBooleanTF{#1}%
  {|#2|}%
  {\left\lvert{#2}\right\rvert}%
}
\newcommand*\lang[1]{\mathcal{L}(#1)}
\newcommand*{\modename}[1]{\textsc{#1}}
\newcommand\donotshow[1]{}
\definecolor{bluekeywords}{rgb}{0.13, 0.13, 1}
\definecolor{greentypes}{rgb}{0, 0.5, 0}
\definecolor{redstrings}{RGB}{171, 114, 2}
\definecolor{graynumbers}{rgb}{0.5, 0.5, 0.5}
\definecolor{goldcomments}{rgb}{0.6, 0.4, 0.08}
\definecolor{monitorblue}{RGB}{18, 163, 38}
\lstdefinelanguage{Lola}{
  keywords=[0]{input, output, trigger},
  keywordstyle=[0]\bfseries\color{bluekeywords},
  keywords=[1]{if, then, else, aggregate, defaults, offset, last},
  keywords=[2]{Int8, Int16, Int32, Int64, UInt8, UInt16, UInt32, UInt64, Bool, Float32, Float64, @1Hz, @5Hz, @10Hz, @100mHz, @1kHz},
  keywordstyle=[2]\color{greentypes},
    sensitive=false,
    comment=[l]{//},
    morecomment=[s]{/*}{*/},
    morestring=[b]',
    morestring=[b]"
}
\definecolor{prelude}{HTML}{F79A3D}
\definecolor{loop}{HTML}{2874ae}
\definecolor{prefix}{HTML}{61B940}
\definecolor{postfix}{HTML}{A370AB}
\newlength{\arrow}
\newcommand*{\myrightarrow}[1]{\xrightarrow{\mathmakebox[\arrow]{#1}}}
\definecolor{CommentColor}{rgb}{0.16,0.60,0.16}
\tikzstyle{state}=[
\begin{document}


\title{Conservative Hybrid Automata from Development Artifacts}

\keywords{Hybrid Automata, Conservative Construction, Automata Learning}



\author{Niklas Metzger}
\affiliation{%
	\institution{CISPA Helmholtz Center for Information Security}
	\city{Saarbrücken}
	\country{Germany}
}
\email{niklas.metzger@cispa.de}
\author{Sanny Schmitt}
\affiliation{%
	\institution{CISPA Helmholtz Center for Information Security}
	\city{Saarbrücken}
	\country{Germany}
}
\email{sanny.schmitt@stud.uni-saarland.de}
\author{Maximilian Schwenger}
\affiliation{%
	\institution{CISPA Helmholtz Center for Information Security}
	\city{Saarbrücken}
	\country{Germany}
}
\email{maximilian.schwenger@cispa.de}


\acmConference[HSCC'22]{Hybrid Systems: Computation and Control}{May 4--6, 2022}{Milan, Italy}

\begin{CCSXML}
<ccs2012>
   <concept>
       <concept_id>10010520.10010553</concept_id>
       <concept_desc>Computer systems organization~Embedded and cyber-physical systems</concept_desc>
       <concept_significance>500</concept_significance>
       </concept>
   <concept>
       <concept_id>10003752.10003753.10003765</concept_id>
       <concept_desc>Theory of computation~Timed and hybrid models</concept_desc>
       <concept_significance>500</concept_significance>
       </concept>
   <concept>
       <concept_id>10003752.10003766</concept_id>
       <concept_desc>Theory of computation~Formal languages and automata theory</concept_desc>
       <concept_significance>100</concept_significance>
       </concept>
 </ccs2012>
\end{CCSXML}

\ccsdesc[500]{Computer systems organization~Embedded and cyber-physical systems}
\ccsdesc[500]{Theory of computation~Timed and hybrid models}
\ccsdesc[100]{Theory of computation~Formal languages and automata theory}

\begin{abstract}
The verification of cyber-physical systems operating in a safety-critical environment requires formal system models.
The validity of the verification hinges on the precision of the model: possible behavior not captured in the model can result in formally verified, but unsafe systems.
Yet, manual construction is delicate and error-prone while automatic construction does not scale for large and complex systems.
As a remedy, this paper devises an automatic construction algorithm that utilizes information contained in artifacts of the development process:  a runtime monitoring specification and recorded test traces.
These artifacts incur no additional cost and provide sufficient information so that the construction process scales well for large systems.
The algorithm uses a hybrid approach between a top-down and a bottom-up construction which allows for proving the result \emph{conservative}, while limiting the level of over-ap\-prox\-i\-ma\-tion.
\end{abstract}

\maketitle

\section{Introduction} 

Hybrid systems connect the discrete, logical world of computers with the continuous, unpredictable real world.
In the last decades, they became an inevitable part of society by controlling essential infrastructures such as power plants, commercial airplanes, and cars.
A mathematical model for such systems are hybrid automata, which combine information regarding their discrete control structure and continuous physical behavior.
They can be used in all phases of development.  
Before deployment, they allow for analysis and verification of critical properties.  
During deployment, time-bounded prediction and identification of anomalous or unexpected behavior based on the model is possible. 
Lastly, recorded mission data enables a post-mortem analysis as well.
Albeit indisputably beneficial for the development process, designing a hybrid automaton to properly reflect the semantics of the system is a delicate process. 
Thus, unsurprisingly, several approaches aim at automatically constructing either hybrid automata or their simpler cousins, timed automata, based on execution traces of the system.
These traces are usually a development artifact as they get recorded during test runs.
Estimation methods, most prominently machine-learning, yield promising results in terms of reconstructing the correct discrete structure of the automaton and approximating the continuous dynamics.
Their great accuracy notwithstanding, the \emph{direction} of the approximation is unclear, resulting in over- and under-approximation. 
While this suffices for conveying the gist of the system, it does limit its capabilities in terms of safety-critical analyses.
For this reason, we propose a construction algorithm for \emph{conservative} hybrid automata, \ie, a guaranteed over-approximation of the original system.

Rather than relying solely on execution traces, the construction employs another development artifact:  a formal runtime monitoring specification.
This specification is crucial for safety-critical systems as it enables dynamic verification methods --- \emph{runtime verification} --- in which a dedicated component monitors the behavior of the system during the execution.
When the monitor deems this information indicative of a malfunction, it terminates the execution.
Note that the constraints imposed by the specification encompass the entire execution and change depending on the state of the system.
Hence, to judge the situation accurately, the monitor --- and by proxy the specification --- needs to keep track of different operational phases.

This comes to show that both the execution traces and the specification manifest expertise acquired during the development process: 
a) keeping track of operational phases requires the specification to contain information regarding the discrete control structure and b) since the execution traces are recorded test runs, they are expected to satisfy relevant coverage criteria.
Hence, not only do they cover the ``average'', expected behavior including initialization and termination, they also cover the extreme and corner case behavior.
Consequently, both the specification and the traces constitute indispensable resources. 

The construction algorithm proposed in this paper first extracts the implicit discrete control structure from the specification, resulting in a strong over-approximation of the system.
It then proceeds by a) refining this information based on discrete control signals contained in the execution traces and b) enriching it by analyzing the evolution of samples over time.
The resulting automaton is an under-approximation. 
Hence, the last step of the algorithm merges control modes within the automaton based on discrete evidence found in the traces to finally obtain an over-approximation of the original system.
This mixture of a top-down and bottom-up construction results in an automaton that is both a provable over-approximation and retains a high level of precision.

Apart from being conservative, the construction distinguishes itself from existing approaches in two major ways.
First, the specification roughly indicates the general discrete structure of the constructed automaton.
This alleviates the need to second-guess the structure in its entirety, reducing revisions to local sub-structures.
This pushes scalability far beyond $L^\ast$-based approaches~\cite{Angluin} like Medhat~\etal~\cite{MiningFramework} in which significant time is spent to determine the discrete structure.
Secondly, the construction reduces the level of over-approximation by merging modes of an under-approximation only if needed.
This can result in more fine-grained refinements than when successively widening dynamics until the language of the automaton encompasses every input trace~\cite{synthesis}.

An empirical evaluation validates three major claims.
First, the construction requires few traces to produce decent results. 
For a fourteen-mode automaton, for example, seven hand-picked or on average 35 random traces suffice for a perfect reconstruction.
Secondly, the precision --- while not flawless --- comes close to the optimal result for adequate input data.
Thirdly, the construction algorithm scales extraordinarily well.  
Even large automata with over 1000 modes can be constructed within mere seconds.
All three benefits are the result of relying on development artifacts in form of test traces and a runtime monitoring specification: a readily available resource often left under-utilized.

All in all, the contributions of this paper are:
\begin{itemize}
  \item A three step construction for conservative hybrid automaton:  
  First, it extracts a discrete over-approximation of the system from a runtime monitoring specification.
  This approximation is successively enriched with continuous information and refined into an under-approximation by incorporating data from execution traces.
  Lastly, it merges modes based on discrete evidence found in test traces until obtaining an over-ap\-prox\-i\-ma\-tion.
  \item A correctness proof of the construction:  under realistic assumptions on the input traces, the constructed automaton subsumes the language of the original system projected onto the behavior exposed through the set of input traces.
    The projection is required to eliminate parts of the system that are not exposed to the outside such as unreachable modes.
  \item Experimental results showcasing the quality and scalability of the conservative construction.
    Even small sets of input traces stemming from random walks allow for precise constructions.
    Moreover, automata with thousands of modes can be constructed in a matter of seconds.
\end{itemize}





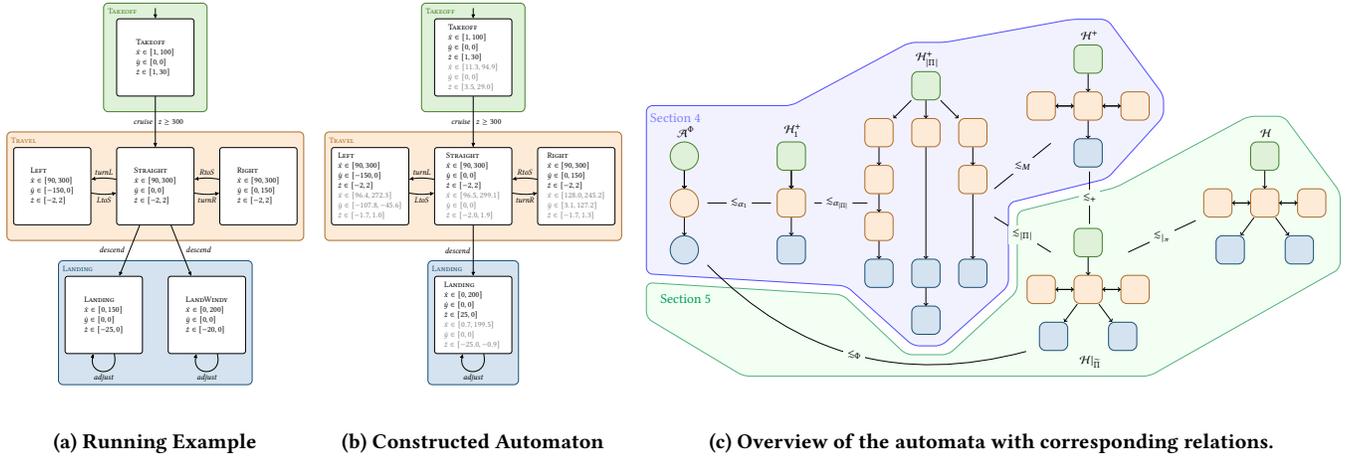
\begin{figure*}[t]
	\begin{subfigure}[b]{0.225\linewidth}
		\centering
\resizebox{\linewidth}{!}{
\begin{tikzpicture}[->,>=stealth',shorten >= 1pt,auto]
\node[state, fill=white, initial above, initial text=] (takeoff) at (0,0){%
  \modename{Takeoff}\\
  $\dot{x} \in [1,100]$ \\
  $\dot{y} \in [0,0]$ \\
  $\dot{z} \in [1,30]$
};

\node[state, fill=white] (straight) at (0, -5) {%
  \modename{Straight}\\
  $\dot{x} \in [90,300]$ \\
  $\dot{y} \in [0,0]$ \\
  $\dot{z} \in [-2,2]$ 
};

\node[state, fill=white] (left) at (-4, -5) {%
  \modename{Left} \\
  $\dot{x} \in [90,300]$ \\
  $\dot{y} \in [-150,0]$ \\
  $\dot{z} \in [-2,2]$ 
};

\node[state, fill=white] (right) at (4, -5) {%
  \modename{Right} \\
  $\dot{x} \in [90,300]$ \\
  $\dot{y} \in [0,150]$ \\
  $\dot{z} \in [-2,2]$
};

\node[state, fill=white] (landing) at (-2, -10) {%
  \modename{Landing}\\
  $\dot{x} \in [0,150]$\\
  $\dot{y} \in [0,0]$ \\
  $\dot{z} \in [-25,0]$
};

\node[state, fill=white] (landwindy) at (2, -10) {%
  \modename{LandWindy} \\
  $\dot{x} \in [0,200]$ \\
  $\dot{y} \in [0,0]$ \\
  $\dot{z} \in [-20,0]$ 
};

\begin{scope}[-,on background layer]
  \filldraw[fill=prefix!20, draw=prefix!70!black, rounded corners=5pt]
    ($(takeoff.north west)+(-0.5,0.6)$)  -- ($(takeoff.north east)+(0.5,0.6)$) -- ($(takeoff.south east)+(0.5,-0.6)$) -- ($(takeoff.south west)+(-0.5,-0.6)$) -- cycle node[anchor=north west] at ($(takeoff.north west)+(-0.45,0.5)$) {$\color{prefix!70!black} \textsc{Takeoff}$};
\end{scope}

\begin{scope}[-,on background layer]
  \filldraw[fill=prelude!20, draw=prelude!70!black, rounded corners=5pt]
    ($(left.north west)+(-0.25,0.6)$) -- ($(right.north east)+(0.25,0.6)$) -- ($(right.south east)+(0.25,-0.6)$) -- ($(left.south west)+(-0.25,-0.6)$) -- cycle node[anchor=north west] at ($(left.north west)+(-0.2,0.5)$) {\color{prelude!70!black} \textsc{Travel}};
\end{scope}

\begin{scope}[-,on background layer]
  \filldraw[fill=loop!20, draw=loop!70!black, rounded corners=5pt]
    ($(landing.north west)+(-0.25,0.6)$) -- ($(landwindy.north east)+(0.25,0.6)$) -- ($(landwindy.south east)+(0.25,-1.2)$) -- ($(landing.south west)+(-0.25,-1.2)$) -- cycle node[anchor=north west] at ($(landing.north west)+(-0.2,0.5)$) {\color{loop!70!black} \textsc{Landing}};
\end{scope}

\path (takeoff) edge node[left] {%
        $\mathit{cruise}$
      } node[right] {%
        $z \geq 300$
      } (straight)
      (straight) edge[bend right=10] node[above] {%
        $\mathit{turnL}$
      } (left)
      (straight) edge[bend right=10] node[below] {%
        $\mathit{turnR}$
      } (right)
      (left) edge[bend right=10] node[below] {%
        $\mathit{LtoS}$
      } (straight)
      (right) edge[bend right=10] node[above] {%
        $\mathit{RtoS}$
      } (straight)
      (straight) edge node[above left,pos=.6] {%
        $\mathit{descend}$
      } (landing)
      (straight) edge node[above right,pos=.6] {%
        $\mathit{descend}$
      } (landwindy)
      (landing) edge[loop below,looseness=3] node[below] {%
        $\mathit{adjust}$
      } (landing)
      (landwindy) edge[loop below,looseness=3] node[below] {%
        $\mathit{adjust}$
      } (landwindy);
\end{tikzpicture}
}
		\caption{Running Example}
		\label{fig:ha:drone}
	\end{subfigure}
	\hfill
	\begin{subfigure}[b]{0.225\linewidth}
		\centering
\resizebox{\linewidth}{!}{
\begin{tikzpicture}[->,>=stealth',shorten >= 1pt,auto]
\node[state, fill=white, initial above, initial text=] (takeoff) at (0,0) {%
  \modename{Takeoff}\\
  $\dot{x} \in [1,100]$ \\
  $\dot{y} \in [0,0]$ \\
  $\dot{z} \in [1,30]$ \\
  \color{gray} $\dot{x} \in [11.3,94.9]$ \\
  \color{gray} $\dot{y} \in [0,0]$ \\
  \color{gray} $\dot{z} \in [3.5,29.0]$
};

\node[state, fill=white] (straight) at (0, -5) {%
  \modename{Straight}\\
  $\dot{x} \in [90,300]$ \\
  $\dot{y} \in [0,0]$ \\
  $\dot{z} \in [-2,2]$ \\
  \color{gray} $\dot{x} \in [96.5,299.1]$ \\
  \color{gray} $\dot{y} \in [0,0]$ \\
  \color{gray} $\dot{z} \in [-2.0,1.9]$
};

\node[state, fill=white] (left) at (-4, -5) {%
  \modename{Left} \\
  $\dot{x} \in [90,300]$ \\
  $\dot{y} \in [-150,0]$ \\
  $\dot{z} \in [-2,2]$ \\
  \color{gray} $\dot{x} \in [96.4,272.3]$ \\
  \color{gray} $\dot{y} \in [-107.8,-45.6]$ \\
  \color{gray} $\dot{z} \in [-1.7,1.0]$
};

\node[state, fill=white] (right) at (4, -5) {%
  \modename{Right} \\
  $\dot{x} \in [90,300]$ \\
  $\dot{y} \in [0,150]$ \\
  $\dot{z} \in [-2,2]$ \\
  \color{gray} $\dot{x} \in [128.0,245.2]$ \\
  \color{gray} $\dot{y} \in [3.1,127.2]$ \\
  \color{gray} $\dot{z} \in [-1.7,1.3]$
};

\node[state, fill=white] (landing) at (0, -10) {%
  \modename{Landing} \\
  $\dot{x} \in [0,200]$ \\
  $\dot{y} \in [0,0]$ \\
  $\dot{z} \in [25,0]$ \\
  \color{gray} $\dot{x} \in [0.7,199.5]$ \\
  \color{gray} $\dot{y} \in [0,0]$ \\
  \color{gray} $\dot{z} \in [-25.0,-0.9]$
};

\begin{scope}[-,on background layer]
  \filldraw[fill=prefix!20, draw=prefix!70!black, rounded corners=5pt]
    ($(takeoff.north west)+(-0.5,0.6)$)  -- ($(takeoff.north east)+(0.5,0.6)$) -- ($(takeoff.south east)+(0.5,-0.6)$) -- ($(takeoff.south west)+(-0.5,-0.6)$) -- cycle node[anchor=north west] at ($(takeoff.north west)+(-0.45,0.5)$) {$\color{prefix!70!black} \textsc{Takeoff}$};
\end{scope}

\begin{scope}[-,on background layer]
  \filldraw[fill=prelude!20, draw=prelude!70!black, rounded corners=5pt]
    ($(left.north west)+(-0.25,0.6)$) -- ($(right.north east)+(0.25,0.6)$) -- ($(right.south east)+(0.25,-0.6)$) -- ($(left.south west)+(-0.25,-0.6)$) -- cycle node[anchor=north west] at ($(left.north west)+(-0.2,0.5)$) {\color{prelude!70!black} \textsc{Travel}};
\end{scope}

\begin{scope}[-,on background layer]
  \filldraw[fill=loop!20, draw=loop!70!black, rounded corners=5pt]
    ($(landing.north west)+(-0.25,0.6)$) -- ($(landing.north east)+(0.25,0.6)$) -- ($(landing.south east)+(0.25,-1.2)$) -- ($(landing.south west)+(-0.25,-1.2)$) -- cycle node[anchor=north west] at ($(landing.north west)+(-0.2,0.5)$) {\color{loop!70!black} \textsc{Landing}};
\end{scope}

\path (takeoff) edge node[left] {%
        $\mathit{cruise}$
      } node[right] {%
        $z \geq 300$
      } (straight)
      (straight) edge[bend right=10] node[above] {%
        $\mathit{turnL}$
      } (left)
      (straight) edge[bend right=10] node[below] {%
        $\mathit{turnR}$
      } (right)
      (left) edge[bend right=10] node[below] {%
        $\mathit{LtoS}$
      } (straight)
      (right) edge[bend right=10] node[above] {%
        $\mathit{RtoS}$
      } (straight)
      (straight) edge node[left] {%
        $\mathit{descend}$
      } (landing)
      (landing) edge[loop below,looseness=3] node[below] {%
        $\mathit{adjust}$
      } (landing);
\end{tikzpicture}
}
		\caption{Constructed Automaton}
		\label{fig:ha:drone_output}
	\end{subfigure}
	\hfill
	\begin{subfigure}[b]{.525\linewidth}
\centering
\resizebox{\linewidth}{!}{
\tikzstyle{state}=[draw=none, rectangle, fill=none, align=left, minimum width=2cm, 
minimum height = 2.3cm, rounded corners=1mm, inner sep=0pt]
\begin{tikzpicture}[
specgreen/.style={circle, thick, draw = prefix!70!black, fill=prefix!20, minimum size= .8cm},	
specorange/.style={circle, thick, draw = prelude!70!black, fill=prelude!20, minimum size= .8cm},	
specblue/.style={circle, thick, draw = loop!70!black, fill=loop!20, minimum size= .8cm},	
hybridgreen/.style={rectangle, thick, draw = prefix!70!black, fill=prefix!20, rounded corners=5pt, minimum size= .8cm},	
hybridorange/.style={rectangle, thick, draw = prelude!70!black, fill=prelude!20, rounded corners=5pt, minimum size= .8cm},	
hybridblue/.style={rectangle, thick, thick, draw = loop!70!black, fill=loop!20, rounded corners=5pt, minimum size= .8cm},	
hybedge/.style={->}
]
\node[state, thick] (specautomaton) [label = above:$\finaut^\spec$]{
	\begin{tikzpicture}
		\node[specgreen](top){};
		\node[specorange](mid) [below = 0.5 of top]{};
		\node[specblue](bot) [below = 0.5 of mid]{};
		
		\path
		(top) edge[hybedge, ->] (mid)
		(mid) edge[hybedge,->] (bot)
		(top) edge[hybedge,->] (mid)
		;
	\end{tikzpicture}
};

\node[state, thick] (oneautomaton) [right = 1 of specautomaton, label = above:$\constraut_{1}$] {
\begin{tikzpicture}
		\node[hybridgreen](top){};
		\node[hybridorange](mid) [below = 0.5 of top]{};
		\node[hybridblue](bot) [below = 0.5 of mid]{};
		
		\path
		(top) edge[hybedge,->] (mid)
		(mid) edge[hybedge,->] (bot)
		(top) edge[hybedge,->] (mid)
		;
	\end{tikzpicture}
};

\node[state, thick] (traceautomaton) [right = 1 of oneautomaton, label = above:{$\constraut_{|\traces|}$}] {
	\begin{tikzpicture}
		\node[hybridgreen](top){};
		\node[hybridorange](midmid) [below = 0.5 of top]{};
		\node[hybridorange](midleft) [left = 0.5 of midmid]{};
		\node[hybridorange](midleft1) [below = 0.5 of midleft]{};
		\node[hybridorange](midleft2) [below = 0.5 of midleft1]{};
		\node[hybridorange](midright1) [right = 0.5 of midmid]{};
		\node[hybridorange](midright2) [below = 0.5 of midright1]{};
		\node[hybridblue](botleft) [below = 0.5 of midleft2]{};
		\node[hybridblue](botmid) [ right= 0.5 of botleft]{};
		\node[hybridblue](botmid1) [ below= 0.5 of botmid]{};
		\node[hybridblue](botright) [ right= 0.5 of botmid]{};

		\path
		(top) edge[hybedge,->] (midmid)
		(midmid) edge[hybedge,->] (botmid)
		(botmid) edge[hybedge,->] (botmid1)
		(top) edge[hybedge,->] (midleft)
		(midleft) edge[hybedge,->] (midleft1)
		(midleft1) edge[hybedge,->] (midleft2)
		(midleft2) edge[hybedge,->] (botleft)
		(top) edge[hybedge,->] (midright1)
		(midright1) edge[hybedge,->] (midright2)
		(midright2) edge[hybedge,->] (botright)
		;
	\end{tikzpicture}
};

\node[state, thick] (constraut) [above right = -1 and 1 of traceautomaton, label=above:{$\constraut$}, yshift=-50pt] {
\begin{tikzpicture}
		\node[hybridgreen](top){};
		\node[hybridorange](midmid) [below = 0.5 of top]{};
		\node[hybridorange](midleft) [left = 0.5 of midmid]{};
		\node[hybridorange](midright) [right = 0.5 of midmid]{};
		\node[hybridblue](bot) [below = 0.5 of midmid]{};
		
		\path
		(top) edge[hybedge,->] (midmid)
		(midmid) edge[hybedge,->] (bot)
		(midleft) edge[hybedge,->] (midmid)
		(midmid) edge[hybedge,->] (midleft)
		(midmid) edge[hybedge,->] (midright)
		(midright) edge[hybedge,->] (midmid)
		;
	\end{tikzpicture}
};

\node[state, thick] (projaut) [below right = -2 and 1 of traceautomaton, label = below:{$\projection{\aut}$}, yshift=30pt] {
	\begin{tikzpicture}
		\node[hybridgreen](top){};
		\node[hybridorange](midmid) [below = 0.5 of top]{};
		\node[hybridorange](midleft) [left = 0.5 of midmid]{};
		\node[hybridorange](midright) [right = 0.5 of midmid]{};
		\node[hybridblue](botleft) [below left= 0.5 and 0.15 of midmid]{};
		\node[hybridblue](botright) [below right= 0.5 and 0.15 of midmid]{};
		
		\path
		(top) edge[hybedge,->] (midmid)
		(midmid) edge[hybedge,->] (botleft)
		(midmid) edge[hybedge,->] (botright)
		(midleft) edge[hybedge,->] (midmid)
		(midmid) edge[hybedge,->] (midleft)
		(midmid) edge[hybedge,->] (midright)
		(midright) edge[hybedge,->] (midmid)
		;
	\end{tikzpicture}

};

\node[state, thick] (aut) [right = 6 of traceautomaton, label = above:{$\aut$}] {
	\begin{tikzpicture}
		\node[hybridgreen](top){};
		\node[hybridorange](midmid) [below = 0.5 of top]{};
		\node[hybridorange](midleft) [left = 0.5 of midmid]{};
		\node[hybridorange](midright) [right = 0.5 of midmid]{};
		\node[hybridblue](botleft) [below left= 0.5 and 0.15 of midmid]{};
		\node[hybridblue](botright) [below right= 0.5 and 0.15 of midmid]{};
		
		\path
		(top) edge[hybedge,->] (midmid)
		(midmid) edge[hybedge,->] (botleft)
		(midmid) edge[hybedge,->] (botright)
		(midleft) edge[hybedge,->] (midmid)
		(midmid) edge[hybedge,->] (midleft)
		(midmid) edge[hybedge,->] (midright)
		(midright) edge[hybedge,->] (midmid)
		;
	\end{tikzpicture}
};

\node[draw = none](discretesim) at ($(specautomaton)!0.5!(oneautomaton)$){$\lesssim_{\alpha_{1}}$};
\node[draw = none](onetraces) at ($(oneautomaton)!0.33!(traceautomaton)$){$\lesssim_{\alpha_{|\traces|}}$};

\path
($(specautomaton.east) + (-0.4,0)$) edge[-, thick] node[below]{} (discretesim)
(discretesim) edge[-, thick] node[below]{} ($(oneautomaton.west) + (0.4, 0)$)
($(oneautomaton.east) + (-0.4,0)$) edge[-, thick] node[below]{} (onetraces)
(onetraces) edge[-, thick] node[below]{} ($(traceautomaton.west) + (0.1, 0)$)
($(specautomaton.south east) + (-0.4,0)$) edge[-, bend right, thick] node[fill=green!5]{$\lesssim_{\Phi}$} (projaut.south west)
($(traceautomaton.east) + (0.1, 0.4)$) edge[-, thick] node[fill=blue!5]{$\lesssim_{M}$} ($(constraut.south west) + (0.7, 0.7)$)
($(traceautomaton.east) + (0.1, -0.4)$) edge[-, thick] node[fill=green!5]{$\lesssim_{|\traces|}$} ($(projaut.north west) + (+0.7, -0.7)$)
($(projaut.north east) + (-0.7, -0.7)$) edge[-, thick] node[fill=green!5]{$\lesssim_{|_{\pi}}$} ($(aut.west) + (-0.2, -0.5)$)
($(projaut.north) + (0,0.1)$) edge[-, thick] node[fill=green!5]{$\lesssim_{+}$} ($(constraut.south) + (0,-0.1)$)
;

\begin{scope}[-,on background layer]
  \filldraw[draw = blue!70, fill=blue!5, rounded corners=5pt]
    ($(specautomaton.north west)+(-0.1,1)$) -- node[above right = 1.8cm and 0cm of specautomaton] {\Large\color{blue!50}Section 4} 
    ($(specautomaton.south west)+(-0.1,-0.25)$) --($(traceautomaton.south west)+(-0.5,1.4)$) -- ($(traceautomaton.south)+(-0.4,-0.3)$)  -- ($(traceautomaton.south)+(+0.4,-0.3)$) -- ($(traceautomaton.south east)+(+0.5,1.4)$)  -- ($(traceautomaton.east)+(+0.5,0)$) --($(constraut.south)+(+0.5,-0.3)$)-- ($(constraut.east)+(+0.3,-0.3)$) 
  -- ($(constraut.east)+(+0.3,+0.3)$)  --($(constraut.north)+(+0.5,.7)$)--($(constraut.north)+(-0.5,.7)$)
    --($(traceautomaton.north)+(+0.5,.8)$) -- ($(traceautomaton.north)+(-0.5,.8)$)
    -- ($(specautomaton.north west)+(4,1)$) -- cycle;
\end{scope}

\begin{scope}[-,on background layer]
  \filldraw[draw = ForestGreen!70, fill=green!5, rounded corners=5pt]
  ($(specautomaton.south west)+(-0.1,-0.5)$) -- ($(traceautomaton.south west)+(-0.6,1.15)$) -- 
  ($(traceautomaton.south)+(-0.6,-0.55)$)--
  ($(traceautomaton.south)+(+0.6,-0.55)$) -- ($(traceautomaton.south east)+(+0.75,1.25)$)
    -- ($(traceautomaton.east)+(+0.75,-0.25)$) -- ($(constraut.south)+(+0.75,-0.55)$) --($(aut.north)+(-0.5,+0.7)$)-- ($(aut.north)+(+0.5,+0.7)$) 
  -- ($(aut.east)+(+0.3,+0.3)$) --($(aut.south east)+(+0.3, -0.1)$) -- ($(projaut.south east)+(-0.1,-.7)$) -- ($(projaut.south west)+(-8,-.7)$) --($(specautomaton.south west)+(-0.1,-1.5)$) -- cycle node[below = .7cm of specautomaton]{\Large\color{ForestGreen}Section 5};
\end{scope}

\end{tikzpicture}
}
\caption{Overview of the automata with corresponding relations.}	
\label{fig:overview}
\end{subfigure}
\caption{
The specification automaton for an aircraft superimposed by its hybrid automaton in \Cref{fig:ha:drone} and the result of the construction from three hand-picked traces (black dynamics) and ten random traces (gray dynamics) in \Cref{fig:ha:drone_output}.
An overview of the automata is displayed in \Cref{fig:overview}, where he notation $\aut_1 \lesssim \aut_2$ indicates that $\aut_2$ simulates $\aut_1$.}
\end{figure*}

\section{Motivation}\label{sec:motivation}
The foundation for the conservative construction is a set of traces generated from an unknown system and a runtime monitoring specification thereof.
As a running example, consider the automaton depicted in \Cref{fig:ha:drone}, a simple model for an aircraft.
The system starts in a takeoff mode and resides there until reaching cruising altitude.
Here, it can go straight or adjust its course via left and right curves until it attempts a landing.
Under windy conditions, the descend-phase is elongated as a precaution.

A specification for the aircraft imposes several constraints depending on the current state of the system.
For example, during takeoff, the specification requires the aircraft to accelerate; while traveling it requires a stable altitude; during landing it requires the landing gear to be lowered.
An analysis of the specification hence yields a state machine with coarse information on different execution phases as well as conditions on phase-changes.
The state machine is depicted in color in~\Cref{fig:ha:drone}, superimposed by the aircraft automaton.
As can be seen, the specification does not distinguish between maintaining course or adjusting it; the requirements on the system remain the same.
Yet, it contains no indication regarding the continuous behavior of the system.

To fill these gaps, the conservative construction then successively enriches the specification automaton with information extracted from the set of traces.
The whole process is illustrated in \Cref{fig:overview}.
It first translates the specification automaton~$\finaut^\spec$ into a hybrid automaton~$\constraut_1$.
For each step of each trace, it adds more modes into the automaton while maintaining the structure provided by the specification.
The result, \ie, $\constraut_{\card\traces}$, is by design overly restrictive: it consists of a single, isolated path for each trace.
A merge process based on the discrete behavior of modes remedies this problem and finally produces~$\constraut$.

The key point behind~$\constraut$ is that it is \emph{conservative}, \ie, under certain assumptions on the specification and traces, $\constraut$ over-ap\-prox\-i\-mates $\aut$.
The assumptions are three-fold: the specification needs to a) be a coarse abstraction of the actual system, b) agree with the system on phase changes, and c) the set of traces needs to encompass sufficient information on the discrete behavior.
While these assumptions seem strong, they are tailored for the use case at hand such that they are expected to be satisfied naturally.
The first and second assumption concern the specification, which is hand-crafted specifically for the system.
Hence, the specifier must have had knowledge regarding the abstract control structure, \eg through which phases the system traverses during a mission.
Evidently, the concrete control structure refines the abstract structure for more fine-grained control.
This abstract control structure manifests itself in the specification.
Here, each mode of the abstract structure imposes a different set of requirements on the system.
A violation of such a requirement constitutes a safety-hazard.
Hence, engineers need to define precisely when the requirements on the system changes.
As a result, the specification needs to contain the same precise criterion for a phase change as the system itself.

Regarding the third criterion, recall that the set of traces is a development artifact that arose from the testing process.
Thus, not only do they cover large parts of the system's regular execution, they represent executions in which the system was purposefully coerced into extreme and corner case behavior.
Moreover, the empirical evaluation revealed that even small sets of random walks through the underlying system already produces an adequate trace set.  
Here, ``small'' means on average 32 traces for a fourteen state automaton and two traces for a seven state automaton.
This reasoning shows why these assumptions, while strong, are realistic when considering carefully engineered, safety-critical system.
A formal description of the assumptions and the proof that~$\constraut$ is conservative can be found in \Cref{sec:correctness}.

The empirical evaluation of the approach in \Cref{sec:eval} allows for validating three claims.
\begin{enumerate*}[label=\alph*)]
\item The construction requires a \emph{low volume of input traces} --- especially when compared to machine-learning approaches. 
For the running example, the construction requires as little as three traces of length eight.
\item It \emph{scales linearly} for increasing dimension and \emph{quadratically} in the number of traces and size of the original/constructed system. 
Constructing a three-dimensional automaton with $2^{10}$ modes based on a specification with nine states and 512 traces requires less than a second.
\item Though over-approximating, the constructed automaton has a \emph{high level of precision}.
    For the aircraft, the construction is identical to the original apart from a merge of the two terminal modes.
\end{enumerate*}

\section{Preliminaries}\label{sec:prelims}

\begin{definition}[Interval]
  $\interval^n = \real^{2n}$ denotes an $n$-dimensional rectangle where $\interval^1 = \interval$ is an interval over $\real$. The multiplication of a $k$- and an $\ell$-dimensional rectangle yields a $(k+\ell)$-dimensional rectangle.  Addition and multiplication of intervals with scalars are geometric translation and scaling, respectively.
\end{definition}


\begin{definition}[Convex Hull]
Let $S$ be a convex set.
The \emph{convex hull} of two convex sets $A \subseteq S$ and $B \subseteq S$ is the minimal convex set covering both $A$ and~$B$.
Further, let $\mathcal{S}$ be a non-empty set of convex sets. 
$\Convex{\mathcal{X}}$ computes the convex set covering all elements of $\mathcal{X}$ by applying the convex hull iteratively in arbitrary order. 
$\Convex{\mathcal{X}}$ yields the identity for singleton inputs.
\begin{align*}
\chull{A, B} &= \bigcup_{p_a \in A}\bigcup_{p_b \in B} [p_a, p_b] \\
\Convex{Y \cup \mathcal{X}} &= \chull{Y, \Convex{\mathcal{X}}}
\end{align*}
\end{definition}


\subsection{Hybrid Automata}\label{sec:hybrid:automata}

An $n$-dimensional \textit{Multi-Rectangular Hybrid Automaton} $\aut$ is a 6-tuple $(\modes, \actions, \allowbreak \flow, \edges, \guard, \initialstate)$ over $\rn$. 
$\modes$ denotes the finite set of discrete control modes.
A \emph{state} $\sstate \in \modes \times \rn$ of the automaton consists of a discrete mode and a continuous state. 
Here, $\initialstate = (\initial{\mode}, \initial{x}) \in M \times \rn$ is the \emph{initial state} and $\actions$ is the finite set of \emph{action labels}.
$\flow \from \modes \to \interval^n$ defines the \emph{dynamics} of modes.  
When entering one, a random value is drawn from the respective interval for each dimension.
$\edges \subseteq \modes \times \actions \times \modes$ is the finite set of edges containing discrete, labeled \emph{transitions} between two modes.
Lastly, $\guard \from \edges \to \interval^n$ assigns guard conditions to edges. A transition can only be taken in a state if its continuous component lies within the rectangle.
\Cref{fig:ha:drone} shows the easy-to-grasp visual representation of a hybrid automaton where each white rectangle constitutes a mode. 
The text inside it defines the dynamics as differential equations, edge labels state guard conditions.

\paragraph{Semantics.}

Hybrid automata allow for two kinds of transitions: control mode changes according to $\edges$ and delays according to the $\flow$ of the current mode during which the system state evolves continuously. 
More formally, the semantics of a multi-rectangular hybrid automaton \aut are defined based on valid \emph{omniscient traces} through \aut.
An omniscient trace $\godtrace \in \rn \times \rplus \times \rn \times (\edges \times \rplus \times \rn)^k$ with $\godtrace = x_0, \delay_0, x_1, e_1, \delay_1, \dots, x_k, e_k, \delay_k, x_{k+1}$ is valid for an automaton ($\godtrace \tracecompliant \aut$) iff:
\begin{enumerate*}[label=\alph*)]
\item 
the trace starts in the initial state of \aut, \ie, $\initial{x} = x_0$.
\item 
the first discrete transition starts in the initial mode, so $e_1 = (\initial{\mode}, \action, \mode)$ for some $\action$ and $\mode$.
\item 
all guards are satisfied: $\forall 1 \leq i \leq k\colon x_i \in \guard(e_i)$.
\item 
all delay transitions are valid, \ie, for $0 \leq i \leq k$ and $e_{i+1} = (\mode_{s}, \action, \mode_{t})$, the state changes according to the flow: \(x_{i+1} \in (x_{i} + \delay_{i}\cdot\flow(\mode_s)) \).
\end{enumerate*}
The language of an automaton is the set of valid traces: $\lang\aut = \Set{\godtrace \given \godtrace \tracecompliant \aut}$. 

\begin{definition}[Observable Traces]
  An observable trace $\trace$ is an omniscient trace stripped of its information regarding source and target modes: 
  \(\trace \in \rn \times \rplus \times \rn \times (\actions \times \rplus \times \rn)^k\).
\end{definition}
  For the remainder of the paper, unless stated otherwise, a trace refers to an observable trace and we only consider finite languages.

\paragraph{Notation.}
Any automaton with decoration such as $\constraut$ will be implicitly destructed into its components with the same decoration, \eg $\modes^+$ denotes the modes of $\constraut$.
$\dimaccess{x}{k}$ denotes the $k$th component of the $n$-dimensional vector $x$ for $0< k \leq n$.  
For a finite set $A$, $\card{A}$ denotes the cardinality of $A$.
The length $\card\trace$ of a trace $\trace \in \rn \times \rplus \times \rn \times (\actions \times \rplus \times \rn)^k$ is the number of timed transitions occurring in it, \ie, $k+1$. 
A trace of length $k+1$ is implicitly destructed into the following components: $\trace = x^\trace_0, \delay^\trace_0, x^\trace_1, e^\trace_1, \delay^\trace_1, \dots, x^\trace_k, e^\trace_k, \delay^\trace_k, x^\trace_{k+1}$.
Moreover, mode $\mode$ is a member of an omniscient trace if it reaches the mode at least once: $\mode\in\trace \iff \exists i\colon \edge^\godtrace_i = (\mode_1, \action, \mode_2)$ with \(\mode \in \Set{\mode_1,\mode_2}\).
A \emph{step} of a trace is the combination of a delay and a discrete transition.
Further, let $\traces$ be a sequence of traces in arbitrary order.  
Then, $\trace_i$ denotes the $i$th entry of the sequence with $i \leq \card\traces$.

\paragraph{Bisimulation.}
Discrete bisimulation on two (hybrid) automata is defined conventionally by disregarding any continuous behavior or behavior not shared among the automata.
\begin{definition}[Discrete Bisimulation]
  Two modes of $\mode_1, \mode_2$ of two automata $\aut_1$, $\aut_2$ are discretely bisimilar $\mode_1 \discretebisim \mode_2$ iff for all transition labels $\action \in \actions_1 \cap \actions_2$:
  \[ (\mode_1, \action, \mode_1') \in \edges_1 \implies \exists \mode_2'\colon (\mode_2, \action, \mode_2') \in \edges_2 \land \mode_1' \discretebisim \mode_2' \text{ and} \]
  \[ (\mode_2, \action, \mode_2') \in \edges_2 \implies \exists \mode_1'\colon (\mode_1, \action, \mode_1') \in \edges_1 \land \mode_2' \discretebisim \mode_1' \text{\hphantom{ and}}\]
  Further, the two automata are discretely bisimilar $\aut_1 \discretebisim \aut_2$ iff $\initial{\mode^1} \discretebisim \initial{\mode^2}$.
\end{definition}

\section{Constructing Conservative Automata}\label{sec:construction}

The construction proceeds in three steps:
First, it extracts information from the specification to obtain a finite state machine $\finaut^\spec$ and a table mapping discrete control mode changes to conditions for undergoing such a change.
The automaton is a coarse abstraction of the underlying system.
The second step transforms it into a hybrid automaton $\constraut$ and iteratively refines it by extracting information regarding the continuous behavior from the input traces.
By design, the refinement overshoots its goals, resulting in an abstraction that is too fine.
As a remedy, the third step merges parts of the automaton to construct a conservative automaton.
The alternation in coarseness has the effect that the resulting automaton is an over-approximation without being overly permissive.

\subsection{Extracting Discrete Information from the Specification}
The requirements on the system change depending on its current state.
For example, during the landing of an airplane, the landing gear must be lowered whereas it is required to be retracted when on traveling altitude.
Hence, the specification needs to keep track of relevant parts of the system state to impose the proper restrictions.
This process of keeping track induces an abstract state machine that lacks any information on the continuous dynamics since the monitor solely relies on external input data such as sensor readings.
Each abstract state may summarize several concrete modes of the actual system.  
In the plane example, the requirements on the abstract mode ``in full flight'' apply to both the control mode ``no wind'' and ``tail wind'' even though they have different continuous dynamics.
By assumption, the contrary is false:  a change of requirements on the system is always accompanied by a change in concrete modes. 
Intuitively, a change of requirements is strongly linked to an action or reaction of the system: approaching a geofence imposes new constraints and hence prompts a reactionary change of course to satisfy them.
A formalization of these assumptions follows in \Cref{sec:inputreqs}.

The extraction of the abstract automaton varies depending on the specification language.
This paper uses the \rtlola \cite{RTLola,LolaTutorial} monitoring framework since it was designed for and integrated into safety-critical cyber-physical systems~\cite{StreamLAB,lolaindustrial,uav}. 
An \rtlola specification consists of input streams representing data the monitor receives from the system, output streams and triggers. 
With output streams the specifier declares how to process input data with the goal of analyzing the state of the system.
Lastly, triggers define conditions on output streams. 
The satisfaction of a trigger condition prompts the monitor to emit a message to the system, informing it about a violation of a safety requirement or the detection of a phase change.

A specification can keep track of the current set of requirements imposed on the system by using an output stream $\mode^\spec$.
The value of $\mode^\spec$ indicates in which abstract state the system is.
Assume there are two abstract states $\mode_1^\spec$ and $\mode_2^\spec$, and a state transition occurs under some condition $\phi$.
Then, the $\mode^\spec$ stream has the following shape:
\begin{lstlisting}
output $\mode^\spec$ := if $\mode^\spec$ = $\mode_1^\spec$ $\land$ $\phi$ then $\mode_2^\spec$ else last($\mode^\spec)$
\end{lstlisting}
Here, \lstinline{last($\mode^\spec$)} provides the last value of the $\mode^\spec$.
For more possible abstract states and transitions, the conditional statement can be extended accordingly.
In addition, a state change is accompanied by a respective trigger:
\begin{lstlisting}
trigger last($\mode^\spec$) = $\mode^\spec_1$ $\land$ $\mode^\spec$ = $\mode^\spec_2$ $\land$ $\phi$ "$\color{redstrings}\mode^\spec_1 \rightarrow \mode^\spec_1$ with $\color{redstrings}\action$."
\end{lstlisting}  
The trigger checks for a change in $\mode^\spec$ from $\mode_1^\spec$ to $\mode_2^\spec$ and emits this information coupled with the name $\action$ of the respective transition.

An analysis of the output stream and trigger declarations yields two artifacts:
\begin{definition}[Specification Automaton]\label{def:constr:spec}
Given a specification \spec, $\finaut^\spec = (V^\spec, \edges^\spec, \initial{v}^\spec)$ is the \emph{abstract specification automaton} of \spec and $\guardspec = (V^\spec\times\action\times V^\spec) \to \interval^n$ is the \emph{guard condition table} of \spec.
\end{definition}
In the construction of $\finaut^\spec$, $V^\spec$ and $\initial{v}^\spec$ are the domain and initial value of $\mode^\spec$, respectively.
Then, for each trigger as the one stated before, $\edges^\spec$ contains the edge $(\mode^\spec_1,\lambda,\mode^\spec_2)$ and $\guardspec(\edge) = \phi$.
Note that the following assumes $\finaut^\spec$ to be free of unreachable states and related edges.  
This is the case in sensible specifications and can easily be enforced by pruning the respective parts of the graph.

\subsection{Extracting Continuous Information from Traces}

While the specification provides information about the system's \emph{discrete} structure, the traces reveal how the \emph{continuous} state of the system evolves over time.
They also reveal mode changes within a single abstract state.
This information allows for transforming $\finaut^\spec$ into a more fine-grained automaton with annotated dynamics in each mode.
For this, the transformation iteratively constructs an automaton $\constraut$, processing each position of all traces in separation.
This requires to keep track of two maps: a concrete mode-map $\modemap\from\traces\to\modes$ that maps each trace to the mode of the constructed automaton in which it currently resides, and an abstract mode-map $\alpha\from\mode\to V^\spec$ mapping each concrete mode to an abstract one in the specification automaton $\finaut^\spec$. 
\begin{definition}[Construction Initialization]\label{def:constr:init}
The construction starts with a quasi-empty hybrid automaton $\constraut_1$ that is structurally similar to $\finaut^\spec$, a concrete mode-map~$\modemap_1$ and an abstract mode-map~$\alpha_1$ defined as:
\begin{equation*}
\begin{gathered}
  \modes_1 = \Set{\initial\mode} \quad 
  \actions_1 = \emptyset \quad
  \edges_1 = \emptyset \quad 
  \guard_1(e) = \neutralelem \quad
  \modemap_1(\trace) = \initial\mode \\
  \alpha_1(\initial\mode) = \initial{v^\spec} \quad
  \flow_1(\mode) 
  = \prod_{\trace_i \in \traces} \solve(x^{\trace_i}_0, x^{\trace_i}_1, \delay^\trace_0)
\end{gathered}
\end{equation*}
Here, $\neutralelem$ denotes the neutral element with respect to the multiplication of intervals. 
Moreover, the $\solve$-function computes the singular interval representing the linear dynamics exhibited by a delay transition:
\[ \solve(x, x', \delay) = [(x' - x)\delay^{-1}, (x' - x)\delay^{-1}] \]
Thus, $\constraut_1$ already incorporates the information of each trace regarding their first delay transition.
\end{definition}

After the initialization, the procedure successively incorporates information contained in further positions of the traces.
\begin{definition}[Construction Step]\label{def:constr:step}
Given the automaton $\constraut_k$, concrete mode-map $\modemap_k$, and abstract mode-map $\alpha_k$ from the previous construction step. 
Consider the $k$-th step of each input trace, \ie, $x^{\trace_i}_k$, $\action^{\trace_i}_k$, $\delay^{\trace_i}_k$, and $x^{\trace_i}_{k+1}$ for all $0 < i \leq \card\traces$.
The $k$th step of the construction produces $\constraut_{k+1}, \modemap_{k+1}$, and $\alpha_{k+1}$.

For brevity, given $\edge = (\mode_1, \action, \mode_2)$, let $\alpha_k(\edge) = (\alpha_k(\mode_1), \action, \alpha_k(\mode_2))$.
Moreover, let $\spec(\trace[..k], \edge)$ be true iff the edge $\edge$ of the specification automaton was derived from a trigger for which the monitor reports a violation for the trace $\trace$ up to the $k$th step.
Lastly, $\mode_{i,k}$ are fresh modes. 
\begin{alignat*}{2}
	\modes_{k+1} &= \modes_k \cup \bigcup_{i}\Set{\mode_{i,k}}
	&\actions_{k+1} = \actions_k \cup \bigcup_{i}\Set{\action^{\trace_i}_k} \\
  \edges_{k+1} &= \edges_k \cup \bigcup_i\Set{(\modemap_k(\trace_i), \action^{\trace_i}_k, \mode_{i,k})} 
    &\modemap_{k+1}(\trace_i) = \mode_{i,k}
\end{alignat*}
\begin{align*}    
  \guard_{k+1}(\edge) &= \twopartdef%
  { \guard_k(\edge) }%
  { \edge \in \edges_k }%
  { \guardspec(\alpha_{k+1}(\edge)) }\\
  \flow_{k+1}(\mode) &= 
    \twopartdef%
    { \solve(x^{\trace_i}_k, x^{\trace_i}_{k+1}, \delay^{\trace_i}_k) }%
    { \mode = \mode_{i,k} }
    { \flow_{k}(\mode) }
\end{align*}
\[\alpha_{k+1}(\mode) = \threepartdef%
  { \mode^\alpha }%
  { \exists \trace\colon \spec(\trace[..k], (\alpha_k(\modemap_k(\trace)), \action^{\trace}_k, \mode^\alpha)) }
  { \alpha_k(\mode) }
  { \mode \in \modes_k }
  { \alpha_k(\modemap_k(\trace_i)) }[ \mode = \mode_{i,k} ]\]
  
\end{definition}

Intuitively, for each position of each trace the construction 
\begin{enumerate*}[(i)]
\item adds a new mode with the dynamics exhibited by the delay transition, 
\item adds a new edge from $\mode$ to $\mode'$ for the discrete transition, and 
\item updates the mode maps accordingly.
\end{enumerate*}
The latter means that if the transition was accompanied by a step in $\finaut^\spec$, $\alpha$ maps the $\mode'$ to the respective abstract mode and looks up the guard from the specification.
Otherwise, it maps $\mode'$ to the same abstract state as $\mode$ with a vacuous guard indicating a lack of information.

\subsection{Merging Modes}
Evidently, following the procedure yields an automaton with at most $\card\trace \cdot \card\traces$ modes arranged as a tree as can be seen in \Cref{fig:overview}. 
It transformed the overly coarse specification automaton into an overly fine hybrid automaton.
To find the sweet spot between both extremes, the next construction step merges modes within an abstract state provided they are sufficiently similar.
Suppose some relation~$\actionsim$ captures this notion of similarity.
Then, intuitively, the construction deems any two modes $\mode \not\actionsim \mode'$ sufficiently \emph{dis}similar such that they must represent different modes in the original system.
For this, let $\refinerel$ denote the relation induced by $\alpha$ for a constructed hybrid automaton, $\mode_1 \refinerel \mode_2$ indicates that both modes refine the same abstract state, \ie, \(\alpha(\mode_1) = \alpha(\mode_2)\).

\begin{definition}[Action Similarity]\label{def:merges:action}
For a constructed hybrid automaton $\constraut$, two modes $\mode_1, \mode_2 \in \modes^+$ are action-similar if they share some discrete characteristics and reside in the same abstract state of the specification. 
Assume there are some modes $\mode_1', \mode_2' \in \modes^+$ and action $\action \in \actions^+$.a
\begin{align*}
  \mode_1 \actionsim \mode_2 &\iff 
  \alpha(\mode_1) = \alpha(\mode_2) \land {} \\
  &\quad\quad(\Set{(\mode_1', \action, \mode_1),(\mode_2', \action, \mode_2)} \subseteq \edges^+ \lor{}\\
  &\quad\quad\Set{(\mode_1, \action, \mode_1'), (\mode_2, \action, \mode_2')} \subseteq \edges^+)  
\end{align*}
\end{definition}
Note that by construction $\refinerel$ is coarser than $\actionsim$. 

Terminal modes need further attention:  consider the automaton in \Cref{fig:ha:drone}.
There are two identical traces in the language of the automaton starting in \textsc{Takeoff} and traversing \textsc{Straight}, but ending in different \textsc{Landing} modes.
Based on these traces the construction cannot distinguish the two terminal modes, since the difference in modes is unobservable.    
In fact, there is no finite set of traces for which they can be distinguished with certainty.
This forces the construction to merge them as can be seen in \Cref{fig:ha:drone_output}.

\begin{definition}[Terminal and Merge Similarity]\label{def:merges:terminal}
  Two terminal modes are \emph{terminal-similar} iff they reside in the same abstract state.
\[  \mode_1 \termsim \mode_2 \iff \outdeg(\mode_1) = \outdeg(\mode_2) = 0 \land \alpha(\mode_1) = \alpha(\mode_2) \]
Two modes are \emph{merge-similar} iff they are either action-similar or terminal-similar: $ \mergesim\ =\ \actionsim \cup \termsim $
\end{definition}

\paragraph{Merge Operation.}  
The merge operation now minimizes the automaton with respect to~$\mergesim$ by building the quotient automaton.
Formally, a merge operates on an equivalence relation $\bisim$ over the set of modes where each equivalence class $\eqclass \subseteq \modes$ will be replaced by a single representative $\representative[\bisim]\eqclass$. 
By slight abuse of notation let $\representative[\bisim]\mode = \representative[\bisim]\eqclass$ for $\mode \in \eqclass$.
Moreover, if context permits, the subscript may be omitted.
The representative conserves the language of each mode contained in $\eqclass$ by retaining discrete transitions and computing the convex hull for its continuous components. 

\begin{definition}[Merge Automaton]\label{def:merges:commit}
  Merging an automaton $\aut$ with respect to an equivalence relation $\bisim$ yields an automaton $\commitpartition{\aut}{\bisim}$ where all elements of an equivalence class get merged into a single element retaining its language.

\begin{alignat*}{2}
  &\commitpartition{\modes}{\bisim} = \Set{\representative\mode \given \mode \in \modes} 
    &\commitpartition{\initialstate}{\bisim} = (\representative{\initial\mode}, \initial{x}) \\
  &\commitpartition{\flow}{\bisim}(\representative\eqclass) = \Convex{\bigcup_{\mode \in \eqclass}\Set{\flow(\mode)}}
  	&\commitpartition{\modemap}{\bisim}(\trace) = \representative{\modemap(\trace)}\\
  &\commitpartition{\edges}{\bisim} = \Set*{(\representative{\mode_1}, \action, \representative{\mode_2}) \given (\mode_1,\action,\mode_2) \in \edges} \quad
  	&\commitpartition{\alpha}{\bisim}(\representative\mode) = \alpha(\mode) \\
  &\commitpartition{\actions}{\bisim} = \Set{\lambda \given \exists \edge \in \commitpartition{\edges}{\bisim}\colon \edge = (\representative{\eqclass_1}, \action, \representative{\eqclass_2})}
\end{alignat*}
\vspace{-1.4\baselineskip} 
\begin{equation*}
  \commitpartition{\guard}{\bisim}((\eqclass_1, \action, \eqclass_2)) = \Convex{\Set{(\mode_1, \action, \mode_2) \given \mode_1 \in \eqclass_1 \land \mode_2 \in \eqclass_2}}\qquad\thinspace
\end{equation*}
\end{definition}

\subsection{Putting it Together: Construction Algorithm}
The overall construction algorithm now proceeds as outlined in \Cref{alg:overallconstruction}: 
First, the procedure extracts information from the specification, constructs the initial automaton and refines it successively by iterating over the traces. 
After processing all traces completely, the procedure computes and applies the merges with respect to action- and terminal-similarity.

\begin{algorithm}[t]
  \begin{algorithmic}[1]
    \Require Specification \spec, Traces \traces
    \State Extract $\finaut^\spec, \guardspec$ from $\spec$ \Comment{\Cref{def:constr:spec}}
    \State Construct $\constraut_1, \modemap_1, \alpha_1$ from \traces and $\guardspec$ \Comment{\Cref{def:constr:init}}
    \For{$k$ from $1$ to $\card\trace$ for $\trace \in \traces$}
      \State Update to $\constraut_k, \modemap_k, \alpha_k$ \Comment{\Cref{def:constr:step}}
    \EndFor
    \State Compute the action-similarity $\actionsim$ \Comment{\Cref{def:merges:action}}
    \State Compute the merge-similarity $\mergesim$ \Comment{\Cref{def:merges:terminal}}
    \State Compute $\constraut = \commitpartition{\constraut_{\card\traces}}{(\mergesim)}$ \Comment{\Cref{def:merges:commit}}
\end{algorithmic}
\caption{Construct Conservative Hybrid Automaton}
\label{alg:overallconstruction}
\end{algorithm}

\paragraph{Time Complexity.}\label{sec:complexity}

The construction process consists of three phases: extraction, construction and merging.
Recall that the dimensionality, \ie, the number of continuous state variables is $n$.
The first phase scales linearly in the size of the specification \(\bigO{\card\spec}\).
The second phase construct an automaton with a single mode per step of any trace.
Its size and the running time of the construction scales linearly with the number and length of traces as it creates a new mode per step of any trace.
It is also linear in the dimension since the dynamics of each dimension have to be computed separately per mode.
Hence, the complexity is in \(\bigO{n\cdot\card\trace\cdot\card\traces}\).
Lastly, the complexity of the last phase depends on the complexity of a single merge, which is linear in the dimension, and the number of merges.
The latter is quadratic in the size of $\constraut_{\card\traces}$, which in turn is linear in the number and length of traces: \(\bigO{n\cdot\card*{\constraut_{\card\traces}}^2} = \bigO{n\cdot\card\trace^2\cdot\card\traces^2}\).
This, however, only describes the worst case. 
The procedure compares each mode against each other with respect to $\mergesim$.
In the best case, all elements of an equivalence class are identified successively by chance.
In this case, the process is quadratic in the number of equivalence classes: \(\Omega(n\cdot\card\mergesim^2)\).
Here, \(\card\mergesim\) denotes the number of equivalence classes induced by $\mergesim$ with \(\card*{\finaut^\spec} \leq \card*\mergesim = \card*{\modes^+} \leq \card*{M^+_{\card\traces}} \).
In conclusion, the overall asymptotic running time is dominated by the merge procedure: \(\bigO{n\cdot\card\trace^2\cdot\card\traces^2}\)

\section{Correctness of Construction}\label{sec:correctness}
The validation of the construction requires a proof that the constructed automaton is --- under certain assumptions on the input --- indeed conservative.
For this, a key criterion is that the automaton over-approximates the discrete and continuous behavior of the original system when projected down to the parts that contributed to the inputs.
Evidently, if the original system encompasses parts that were neither reflected in the specification, nor traversed in the input traces, the constructed automaton cannot reconstruct it.

Hence, this section first formalizes requirements on the input data.
Then, a definition of projection automata enables proving that the constructed automaton subsumes the language of the projected original system.

\subsection{Requirements on Input Data}\label{sec:inputreqs}
The construction of the conservative automaton relies on the quality of the input traces and specification.
Hence, they need to satisfy three criteria:
\begin{enumerate*}[(i)] 
  \item the specification must be an abstraction of the real system, 
  \item its trigger conditions must be at least as restrictive as the respective conditions on mode changes, and 
  \item the trace set needs to traverse every control mode of the system sufficiently often to capture the discrete behavior.
\end{enumerate*}  

\begin{definition}[Adequacy of Input Data]\label{def:adequacy}
A specification \spec and trace set \traces are \emph{adequate} for a hybrid automaton \aut iff they satisfy three criteria.
\begin{enumerate}[ref=\thedefinition.\arabic*]
  \item\label{def:adequacy:partition} The specification induces a coarser automaton $\finaut^\spec$ than the original, \ie, $\exists \specrel\colon \commitpartition{\aut}{\specrel} \discretebisim \finaut^\spec$.
  \item\label{def:adequacy:guard} 
  Assume $V^\spec = \Set{\representative{\eqclass} \given \eqclass \in \partition}$. 
  For any discrete transition that is both in \aut and $\finaut^\spec$, the specification contains a mode change condition that is at least as permissive as the guard of the respective transition in \aut.
  Formally, let $\mode^\spec_1 \not\specrel \mode^\spec_2$ and $\mode^\spec_1 \discretebisim \mode_1 \land \mode^\spec_2 \discretebisim \mode_2$.
  Then, for all $(\mode^\spec_1, \action, \mode^\spec_2) \in \edges^\spec$ and $\edge = (\representative[\specrel]{\mode_1}, \action, \representative[\specrel]{\mode_2}) \in \commitpartition{\edges}{\specrel}$:
	\[
		\guard(\mode, \action, \mode) \implies \guardspec(e)
	\]
    \item\label{def:adequacy:traceset} For every mode $\mode$ in $\aut$, let $a_{\mode} = \indeg(\mode) + \outdeg(\mode)$ be the number of input and output actions of $\mode$ in $\traces$. 
    The trace set needs to contain more than $a_\mode(a_\mode-1)/2$ traversals through $\mode$.
    Here, a trace \trace traverses through a mode if its omniscient counterpart $\godtrace$ contains two subsequent edges first ending and then starting from~$\mode$.
    \begin{align*}
    	\forall \mode \in \modes_{\aut}: \card*{&\Set*{(\edge^\godtrace_i,\edge^\godtrace_{i+1})\mid \exists \godtrace\in\godtraces, i < \card\godtrace, \action, \action' \in \actions_{\aut}\colon  \\&(\edge^\godtrace_i,\edge^\godtrace_{i+1}) = ((\mode_1, \action, \mode),(\mode, \action', \mode_2)) 
        }} > \frac{(a_{\mode})(a_{\mode}-1)}{2}
    \end{align*}
  \end{enumerate}
\end{definition}

Evidently, these criteria depend on the original hybrid automaton, which seemingly contradicts the premise of this paper regarding the unavailability of such a model.
Nevertheless, the criteria are designed in a way that they are either satisfied naturally or can be satisfied without access to all formal details of the system.

First, consider \ref{def:adequacy:partition} and \ref{def:adequacy:guard}.
These criteria restrict the specification, which was hand-crafted for the underlying system. 
Here, a reasonable specification summarizes control modes that are subject to the same requirements;  at the same time, the specification needs to capture changes in the abstract state precisely to impose the correct sub-specification on the system.
Thus, even without perfect knowledge of the inner workings and dynamics of the system, the first two criteria can be ensured.
Consider the third criterion, which is concerned with the trace set.
A thorough testing process demands that all discrete paths\footnote{This is the case for terminating systems.  In the presence of infinite discrete paths, a threshold on the length of executions is usually imposed.} through the system are tested at least once.  
Moreover, the system has a fixed control interface, represented by $\actions$.
As a result, it is reasonable to assume that the number of times each control mode is traversed during the development exceeds the threshold required by Criterion~\ref{def:adequacy:traceset}.
This again does not rely on knowledge about the exact mode structure nor dynamics of the underlying system.

The exact threshold for the third criterion seems arbitrary but is anchored in graph theory, the impact of which can be seen in the next lemma.
\begin{lemma}[Trace Connectivity]\label{lem:connectivity}
  Let~$\spec$ and $\traces$ be adequate for $\aut$. For any mode~$\mode$ in~$\aut$ with incoming edge label $\action_i$ and outgoing edge label $\action_o$, there is a mode $\mode'$ in~$\constraut$ with the very same edge labels and $\alpha(\mu) = \alpha(\mu')$.
\end{lemma}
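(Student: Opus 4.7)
The plan is to trace each in-label/out-label combination at $\mu$ back to a witness mode produced by the construction, and then to show that the merging step collapses all such witnesses into a single equivalence class whose representative retains both the incoming edge labeled $\action_i$ and the outgoing edge labeled $\action_o$. First, I would fix $\mu \in \modes_\aut$ along with an in-label $\action_i$ and an out-label $\action_o$; using the traceset criterion of \Cref{def:adequacy:traceset} and the fact that the specification contains no unreachable parts, I would exhibit an omniscient trace $\godtrace$ whose $k$-th edge enters $\mu$ via $\action_i$ and whose $(k{+}1)$-th edge leaves $\mu$ via $\action_o$. Applying \Cref{def:constr:step} at positions $k$ and $k+1$ produces a fresh $\mu^\ast \in \modes^+_{\card\traces}$ with an incoming edge labeled $\action_i$ and an outgoing edge labeled $\action_o$. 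The partition criterion of \Cref{def:adequacy:partition} together with the update rule for $\alpha$ ensures that $\alpha_{\card\traces}(\mu^\ast)$ coincides with the abstract state to which $\mu$ is $\specrel$-related, which is what I understand $\alpha(\mu)$ to denote for a mode of the original automaton.

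Next, I would introduce the graph $G_\mu$ whose vertices are the labels of in- and out-edges of $\mu$ in $\aut$ and whose edges are the observed traversal pairs $(\action,\action')$ through $\mu$. Each edge of $G_\mu$ yields at least one witness mode in $\modes^+_{\card\traces}$ residing in the abstract state $\alpha(\mu)$, and by \Cref{def:merges:action} two such witnesses are action-similar precisely when their corresponding edges of $G_\mu$ share an endpoint. Hence the partition of these witnesses under $\mergesim$ coincides with the connected-components partition of $G_\mu$, with terminal modes additionally being absorbed through $\termsim$ when $\mu$ has no outgoing edges.

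The main obstacle is to deduce, from \Cref{def:adequacy:traceset}, that $G_\mu$ is connected. The bound of more than $a_\mu(a_\mu - 1)/2$ traversal pairs is tailored to strictly exceed the maximum edge count $\binom{a_\mu - 1}{2}$ of a disconnected graph on $a_\mu$ vertices, so $G_\mu$ must be connected. Consequently, all witnesses derived from traversals of $\mu$ fold into a single $\mergesim$-equivalence class, and its representative $\mu' = \representative{\mu^\ast}$ in $\constraut$ inherits via \Cref{def:merges:commit} both an in-edge labeled $\action_i$ and an out-edge labeled $\action_o$, while $\commitpartition{\alpha}{\mergesim}(\mu') = \alpha_{\card\traces}(\mu^\ast) = \alpha(\mu)$. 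This concludes the proof.
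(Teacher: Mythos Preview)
Your overall strategy matches the paper's: build a graph on the action labels incident to $\mu$, use the traversal bound from Criterion~\ref{def:adequacy:traceset} to force connectivity, and then let action-similarity collapse all traversal witnesses into one representative that carries every incident label. The paper phrases this as a ``reduction to the graph connectivity problem,'' but the content is the same as your second and third paragraphs.

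There is, however, a misstep in your first paragraph. You claim that Criterion~\ref{def:adequacy:traceset} lets you exhibit a single trace that enters $\mu$ via $\action_i$ and leaves via $\action_o$ in consecutive steps. This is not guaranteed: the criterion only bounds the \emph{total} number of distinct traversal pairs, not which particular pairs occur. There may be no trace realising the specific pair $(\action_i,\action_o)$; all you know is that $G_\mu$ has enough edges to be connected, so $\action_i$ and $\action_o$ are joined by a \emph{path} in $G_\mu$, not necessarily by a single edge. Fortunately the argument you give afterwards does not actually need a single witness $\mu^\ast$ carrying both labels: once $G_\mu$ is connected, every traversal witness lies in the same $\mergesim$-class, and the representative inherits the incoming $\action_i$-edge from whichever witness had it and the outgoing $\action_o$-edge from whichever (possibly different) witness had that. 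Simply drop the overclaim in the first paragraph and let the connectivity argument carry the proof on its own, as the paper does.
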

\begin{proof}
By reduction on the graph connectivity problem. 
Let $\mathcal{G}(\mode, \traces) = (V, E)$ be a graph where $V$ is the set of labels of incoming or outgoing edges of $\mode$ in \aut. 
$E$ connects two action labels $\action,\action'$ if there is a trace in $\traces$ that reaches $\mode$ via $\action$ and leaves via $\action'$.
The solution of the graph connectivity problem states that $\action$ and $\action'$ are necessarily connected if $\card E$ exceeds $\card V \cdot (\card V - 1) / 2$.
This threshold corresponds to Criterion \ref{def:adequacy:traceset}.
Recall that $\actionsim$ relates all modes with at least one common incoming or outgoing edge label.
Thus, since $\actionsim \subseteq \mergesim$, all respective modes are merged in $\constraut$ and by \Cref{def:merges:commit}, the resulting mode $\representative[\mergesim]{\mode}$ retains these transitions.
Lastly, $\mergesim$ refines $\specrel$, hence $\alpha(\mode) = \alpha(\representative[\mergesim]\mode)$, which concludes the proof.
\end{proof}

\subsection{Projection Automata}

The assessment of the quality of the reconstruction depends on the projection of the original system onto the set of traces.
This first requires a definition of projections on automata.

\begin{definition}[Projection Automata]\label{def:proj:aut}
The projection of an automaton $\aut$ down to a set of omniscient traces $\godtraces$ is an automaton $\projection{\aut}$ with the following constituents.
\begin{gather*}
  \projection{\modes} = \bigcup_{\godtrace \in \godtraces} \bigcup_{i \leq \card{\godtrace}} \Set{\mode^\godtrace_i \given \mode^\godtrace_i \in \modes}\\
  \projection{\actions} = \bigcup_{\godtrace \in \godtraces} \bigcup_{i < \card{\godtrace}} \Set{\action^\godtrace_i \given \action^\godtrace_i \in \actions } \\
  \projection{\edges} = \Set{(\mode, \action, \mode') \in \edges \given \mode, \mode' \in \projection{\modes} \land \action \in \projection{\actions}}\\
  \projection{\guard(\edge)} = [\nu^{\min}_{\edge}, \nu^{\max}_{\edge}] \quad\quad
  \projection{\flow(\mode)} = [\nu^{\min}_{\mode}, \nu^{\max}_{\mode}] \quad\quad   \projection{\initialstate} = \initialstate 
\shortintertext{
Here, for $\phi \in \Set{\min,\max}$, the min and max values for guards and flows are:} 
\begin{align*}
    \nu^\phi_{\edge} &= \phi\Set*{ x \given \exists \godtrace, \exists i < \card\godtrace\colon x = x^\godtrace_i \land \edge = \edge^\godtrace_i}\\
	 \nu^\phi_{\mode} &= \phi\Set*{ f \given \exists \godtrace, \exists i < \card\godtrace\colon \edge^\godtrace_i = (\mode', \action, \mode) \land x^\godtrace_i + \delay^\godtrace_i f = x^\godtrace_{i+1}} 
\end{align*}
\end{gather*}
\end{definition}

Intuitively, the projection strips the automaton of any information not reflected in the set of traces.
This reduces the sets of modes, edges, and transition labels.  
By definition, the initial state occurs in all traces and thus remains the same. 
Guards and flows are reduced to the maximum and minimum value exhibited by some trace.

Note that the projection automaton $\projection{\aut}$ is not meant to be constructed at any point;  it merely serves as theoretical point of reference for the quality of the construction. 
It is easy to see that in general the projection reduces the expressiveness of an automaton, \ie, $\lang{\aut} \supseteq \lang{\projection{\aut}}$.
This, however, is not necessarily the case as the following theorem shows.

\begin{restatable}[Perfect Projection]{theorem}{perfectprojection}\label{thm:perfect:projection}
For any hybrid automaton~\aut there is a finite set of traces for which the projection onto these traces yields the identity, \ie,  $\exists \godtraces^\ast \subseteq \lang{\aut}\colon \traces^\ast \text{ finite}\land \lang{\projection{\aut}[\godtraces^\ast]} = \lang\aut$.
\end{restatable}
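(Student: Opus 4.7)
The plan is to leverage the paper's restriction to finite languages stated at the end of \Cref{sec:hybrid:automata}, which makes $\lang{\aut}$ itself a finite set of omniscient traces. Setting $\godtraces^\ast \definedas \lang{\aut}$ gives finiteness immediately, so the task reduces to establishing the equality $\lang{\projection{\aut}[\godtraces^\ast]} = \lang{\aut}$ via a straightforward unfolding of \Cref{def:proj:aut}.

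For the inclusion $\lang{\projection{\aut}[\godtraces^\ast]} \subseteq \lang{\aut}$, I would observe that the projection is, component-wise, a structural restriction of $\aut$: $\projection{\modes} \subseteq \modes$, $\projection{\edges} \subseteq \edges$, $\projection{\actions} \subseteq \actions$, $\projection{\initialstate} = \initialstate$, and both $\projection{\guard}$ and $\projection{\flow}$ yield rectangles contained in the originals, since they are bounding boxes of points drawn from those very rectangles. Hence every trace validated by $\projection{\aut}[\godtraces^\ast]$ satisfies the four clauses of $\tracecompliant \aut$ verbatim. For the converse inclusion $\lang{\aut} \subseteq \lang{\projection{\aut}[\godtraces^\ast]}$, take any $\godtrace \in \lang{\aut} = \godtraces^\ast$. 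Since $\godtrace$ itself is one of the witnesses feeding the projection, every mode it visits, every edge $\edge^{\godtrace}_{i}$, and every action it uses lie in $\projection{\modes}$, $\projection{\edges}$, and $\projection{\actions}$, respectively. Furthermore, each valuation $x^{\godtrace}_{i}$ that $\godtrace$ exhibits at $\edge^{\godtrace}_{i}$ is one of the points over which the bounding rectangle $\projection{\guard(\edge^{\godtrace}_{i})}$ is formed, giving $x^{\godtrace}_{i} \in \projection{\guard(\edge^{\godtrace}_{i})}$; an analogous argument places each flow rate exhibited by $\godtrace$ inside $\projection{\flow}$ of the corresponding target mode. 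Combined with $\projection{\initialstate} = \initialstate$, this yields $\godtrace \tracecompliant \projection{\aut}[\godtraces^\ast]$.

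The main obstacle is a subtlety in the flow clause of \Cref{def:proj:aut}: the index set defining $\nu^{\phi}_{\mode}$ only ranges over edges \emph{entering} $\mode$, so the flow rate witnessed by the very first delay from $x_0$ to $x_1$ in $\initial{\mode}$ is not captured unless $\initial{\mode}$ is revisited by some edge in a trace. I would handle this either by reading the definition to include the initial delay as a boundary case (the natural interpretation, since that segment really is a delay in $\initial{\mode}$), or by restricting attention to automata whose initial mode is revisited -- a mild structural assumption compatible with the adequacy hypotheses of \Cref{sec:inputreqs}. Beyond this technicality the argument is a pure definition chase: because $\godtraces^\ast = \lang{\aut}$, every structural element and every continuous witness occurring in some $\godtrace \in \lang{\aut}$ is, by construction, fed back into the projection, so the projection faithfully reproduces $\lang{\aut}$.
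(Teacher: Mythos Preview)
Your argument rests on reading the sentence ``we only consider finite languages'' at the end of \Cref{sec:hybrid:automata} as asserting that $\lang{\aut}$ is a \emph{finite set}. That reading is not tenable: delays $\delay_i$ range over $\rplus$ and flow values over a rectangle, so any hybrid automaton with a single delay transition already has an uncountable language. The intended meaning is the standard one in automata theory --- the language consists of finite-length words, as opposed to $\omega$-words --- and indeed traces were already defined to have finite length $k+1$. Consequently, setting $\godtraces^\ast \definedas \lang{\aut}$ does not yield a finite set, and the finiteness clause of the theorem is exactly the part your proof fails to establish.

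The paper's proof avoids this by \emph{selecting} finitely many extremal witnesses rather than taking the whole language: one trace per edge (to recover $\projection{\modes}$, $\projection{\actions}$, $\projection{\edges}$), and for each mode and dimension a handful of traces that realise the minimum and maximum flow rate and pre-transition continuous value, so that the bounding rectangles $\projection{\flow(\mode)}$ and $\projection{\guard(\edge)}$ coincide (up to language equivalence) with the originals. This gives an explicit bound $\card{\godtraces^\ast} = \card{\edges} + n\bigl(5\card{\modes} + \sum_{\mode}\outdeg(\mode)\bigr)$. Your definition-chasing for the two inclusions is essentially the right verification step \emph{once} such a finite $\godtraces^\ast$ is in hand; what is missing is precisely the extremal-trace selection that makes $\godtraces^\ast$ finite while still attaining the full projected intervals. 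The subtlety you flag about the initial-mode flow is orthogonal to this gap and applies equally to the paper's construction.
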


The proof of \Cref{thm:perfect:projection} can be found in \Cref{sec:appendix:correctness}.
Note that the language equality cannot be extended to identical or isomorphic automata since \aut can contain unreachable modes that are not reflected in its language and thus not in any trace.
The theorem emphasizes the generality of the conservative construction:
For an appropriate trace set, the projection of an automaton perfectly resembles the original system.
Since the constructed automaton is conservative with respect to this very projection, it is also conservative with respect to the original system. 
This is independent of the exact structure of the underlying system.

\subsection{Construction Guarantees}

The first observations are that application of a merge and iterations of the construction do not reduce the language of an automaton.

\begin{restatable}[Lossless Merge]{lemma}{losslessmerge}
\label{lem:lossless:merge}
Given a constructed hybrid automaton $\constraut$ and an equivalence relation $\bisim$, merging $\constraut$ with respect to $\bisim$ yields a more permissive automaton, \ie, $\lang{\constraut} \subseteq \lang{\commitpartition{\constraut}{\bisim}}$.
\end{restatable}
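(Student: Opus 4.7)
The plan is to prove language inclusion by a direct simulation argument: every omniscient trace that witnesses a word in $\lang{\constraut}$ is lifted to a compliant omniscient trace of $\commitpartition{\constraut}{\bisim}$ by replacing each visited mode with its equivalence-class representative, while leaving the continuous samples, delays, and action labels untouched.

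First, I would fix an arbitrary $\godtrace \tracecompliant \constraut$ of the form $x_0, \delay_0, x_1, \edge_1, \delay_1, \dots, x_k, \edge_k, \delay_k, x_{k+1}$ with $\edge_i = (\mode_{i-1}, \action_i, \mode_i)$, and define $\godtrace' \definedas x_0, \delay_0, x_1, \edge_1', \delay_1, \dots, x_k, \edge_k', \delay_k, x_{k+1}$ where $\edge_i' = (\representative{\mode_{i-1}}, \action_i, \representative{\mode_i})$. I then check the four validity conditions from the semantics against $\commitpartition{\constraut}{\bisim}$. The initial-state condition $\initial{x} = x_0$ is immediate since \Cref{def:merges:commit} sets $\commitpartition{\initialstate}{\bisim} = (\representative{\initial\mode^+}, \initial x)$ and $\representative{\mode_0} = \representative{\initial\mode^+}$ by construction of $\godtrace$. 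The first discrete transition starts in $\representative{\initial\mode^+}$, so condition (b) holds. Membership $\edge_i' \in \commitpartition{\edges^+}{\bisim}$ follows directly from the definition of $\commitpartition{\edges^+}{\bisim}$ applied to $\edge_i \in \edges^+$.

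The substantive step is the continuous one. For each guard, $x_i \in \guard^+(\edge_i)$ by assumption, and since $\edge_i$ is among the pairs whose source and target lie in $\eqclass_1 = [\mode_{i-1}]$ and $\eqclass_2 = [\mode_i]$, the merged guard $\commitpartition{\guard^+}{\bisim}(\edge_i') = \Convex{\{\guard^+(\edge) \mid \edge = (\mu, \action_i, \mu'),\ \mu \in \eqclass_1,\ \mu' \in \eqclass_2\}}$ contains $\guard^+(\edge_i)$ as one of its constituent rectangles, and therefore contains $x_i$. Analogously, the delay condition $x_{i+1} \in (x_i + \delay_i \cdot \flow^+(\mode_i))$ transfers to $\commitpartition{\flow^+}{\bisim}(\representative{\mode_i})$ because the latter is the convex hull of a family of rectangles that contains $\flow^+(\mode_i)$; scaling by $\delay_i \geq 0$ and translation by $x_i$ preserve the inclusion, since both operations are monotone on convex sets of rectangles. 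Hence every delay transition remains valid, and $\godtrace' \tracecompliant \commitpartition{\constraut}{\bisim}$, yielding $\lang{\constraut} \subseteq \lang{\commitpartition{\constraut}{\bisim}}$.

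I expect the only delicate point to be the continuous argument: one has to be careful that the \emph{iterated} convex hull $\Convex{\cdot}$ from the preliminaries truly dominates each of its constituent rectangles, and that the rectangular-arithmetic inclusion $A \subseteq B \implies x + \delay \cdot A \subseteq x + \delay \cdot B$ is used correctly for $\delay \in \rplus$. Both follow from the definitions of $\chull{\cdot,\cdot}$ and interval arithmetic in \Cref{sec:prelims}, so this should be a short formal verification rather than a conceptual obstacle.
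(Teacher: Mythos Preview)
Your proposal is correct and follows essentially the same approach as the paper: both arguments rest on the two facts that (i) every edge of $\constraut$ survives in $\commitpartition{\constraut}{\bisim}$ after replacing endpoints by representatives, and (ii) the convex hull taken for flows and guards dominates each of its constituents. The only cosmetic difference is that the paper phrases this as a short proof by contradiction, whereas you give the direct simulation construction explicitly; your version is more detailed but not conceptually different.
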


The proof of the lossless merge can be found in \Cref{sec:appendix:correctness}.
After showing the property of a merged automaton, we state that our construction is lossless.
\begin{lemma}[Lossless Construction]\label{lem:lossless:construction}
Given a set of traces \traces and specification~\spec. 
For any iterations $i$ and $j$, if $j \geq i$, then the set of edges, the flow, and the transition guards only grow over the iterations:
\[ \edges_i \subseteq \edges_j \land \flow_i \subseteq \flow_j \land \forall \edge \in \edges_i\colon \guard_i(\edge) \subseteq \guard_j(\edge) \]
\end{lemma}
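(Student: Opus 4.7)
The plan is to prove the three inclusions simultaneously by induction on the difference $j - i$. The base case $j = i$ is immediate since every set and every function equals itself. For the inductive step, it suffices to establish the single-step version of the claim, namely that for every $k$:
\[ \edges_k \subseteq \edges_{k+1}, \quad \flow_k \subseteq \flow_{k+1}, \quad \text{and} \quad \forall \edge \in \edges_k\colon \guard_k(\edge) \subseteq \guard_{k+1}(\edge). \]
Transitivity of set inclusion and (pointwise) function extension then lifts the one-step property to arbitrary $j \geq i$.

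For the one-step property, I would simply unfold \Cref{def:constr:step}. The edge inclusion is immediate because $\edges_{k+1}$ is defined as the union of $\edges_k$ with a set of newly added edges, so $\edges_k \subseteq \edges_{k+1}$ holds by definition. The guard inclusion is likewise immediate from the first case of the definition of $\guard_{k+1}$: if $\edge \in \edges_k$, then $\guard_{k+1}(\edge) = \guard_k(\edge)$, giving even equality, which trivially implies $\subseteq$. For the flow, viewing $\flow_k$ as the set of pairs $\{(\mode,\flow_k(\mode)) \mid \mode \in \modes_k\}$, the claim $\flow_k \subseteq \flow_{k+1}$ follows from two observations: first, $\modes_k \subseteq \modes_{k+1}$ by the definition of $\modes_{k+1}$; second, since the fresh modes $\mode_{i,k}$ are disjoint from $\modes_k$, any $\mode \in \modes_k$ falls into the ``otherwise'' branch of the definition of $\flow_{k+1}$, so $\flow_{k+1}(\mode) = \flow_k(\mode)$.

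There is no real obstacle here. The only subtlety worth stating explicitly is the freshness of the newly introduced modes $\mode_{i,k}$, which guarantees that the case-split in \Cref{def:constr:step} never overwrites previously defined values of $\flow$ (or, analogously, of $\alpha$ and $\guard$). With that freshness condition made explicit, each inclusion reduces to direct reading of the respective definition, and the induction closes.
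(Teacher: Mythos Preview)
Your proposal is correct and takes essentially the same approach as the paper: the paper's own proof is a one-liner stating that the lemma follows directly from the construction step (\Cref{def:constr:step}), and your argument is simply a careful unfolding of that observation via a one-step induction. The only difference is the level of detail you provide, which is entirely appropriate.
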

\begin{proof}
This lemma follows directly from the construction step (\Cref{def:constr:step}).	
\end{proof}

This suffices to prove that the language of the constructed automaton at least includes all input traces.

\begin{restatable}[Input Trace Inclusion]{theorem}{inputtraceinclusion}
\label{thm:input:trace:inclusion}
Given an adequate set of traces \traces and specification \spec, the language of a constructed automaton $\constraut$ subsumes \traces, \ie, $\traces \subseteq \lang{\constraut}$.
\end{restatable}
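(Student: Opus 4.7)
The plan is to show that each input trace is already accepted by the pre-merge automaton $\constraut_{\card\traces}$ and then transport the inclusion to the final $\constraut$ via \Cref{lem:lossless:merge}. The key structural observation is that by \Cref{def:constr:step}, each trace $\trace_i \in \traces$ spawns its own dedicated chain of fresh modes $\mode_{i,1}, \mode_{i,2}, \dots$ connected by edges whose labels are precisely the actions of $\trace_i$ in order, rooted at $\initial\mode$. I would argue that this chain, together with the observed continuous samples and delays, is a valid omniscient trace in $\constraut_{\card\traces}$ whose observable projection equals $\trace_i$.

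To establish validity, I would check each clause from \Cref{sec:hybrid:automata}. The start-state and start-mode clauses are immediate, since every trace is initially mapped to $\initial\mode$ via $\modemap_1$ and $\initial{x}$ is common to all traces. The flow clause follows directly from the construction: $\flow_{k+1}(\mode_{i,k}) = \solve(x^{\trace_i}_k, x^{\trace_i}_{k+1}, \delay^{\trace_i}_k)$ is by definition the singleton interval that realizes exactly the slope needed to move $x^{\trace_i}_k$ to $x^{\trace_i}_{k+1}$ in time $\delay^{\trace_i}_k$. The guard clause splits into two cases: when the edge's $\alpha$-image is a genuine edge of $\finaut^\spec$, its guard is $\guardspec$, and \Cref{def:adequacy:guard} ensures this is at least as permissive as the concrete guard in $\aut$ that $\trace_i$ already satisfied; when the edge's $\alpha$-image stays within a single abstract state, it carries the vacuous guard indicating absence of information, which is trivially enabled.

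Once each trace is shown to be realized in $\constraut_{k+1}$ as soon as its $k$th step has been processed, \Cref{lem:lossless:construction} ensures that subsequent iterations only grow $\edges$, $\flow$, and $\guard$, so the realization persists in $\constraut_{\card\traces}$. Applying \Cref{lem:lossless:merge} then yields $\lang{\constraut_{\card\traces}} \subseteq \lang{\commitpartition{\constraut_{\card\traces}}{\mergesim}} = \lang{\constraut}$, from which $\traces \subseteq \lang{\constraut}$ follows.

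The main obstacle I anticipate lies with the initial mode $\initial\mode$, whose single flow must simultaneously accommodate the first delay step of \emph{every} trace, since $\modemap_1$ maps all of them there. I would rely on \Cref{def:constr:init} to argue that $\flow_1(\initial\mode)$ contains each trace's initial slope. A related subtlety is the handling of edges whose $\alpha$-image is not a member of $\edges^\spec$, for instance when two consecutive trace steps remain in the same abstract state; the vacuous-guard convention introduced below \Cref{def:constr:step} must be invoked explicitly to close that case. Beyond these two points, the argument reduces to bookkeeping that each freshly created mode indexes into exactly one trace and one step.
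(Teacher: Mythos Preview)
Your proposal is correct and follows essentially the same route as the paper: induct along each trace through the chain of freshly created modes, verify delay steps via $\solve$, handle guards by the two-case split (specification edge vs.\ internal edge), and then lift the result to $\constraut$ with \Cref{lem:lossless:construction} and \Cref{lem:lossless:merge}. The only minor deviation is that for the specification-edge guard case you invoke \Cref{def:adequacy:guard}, whereas the paper argues directly that the trigger fired on $\trace_i$ by construction of $\alpha_i$, so the trigger condition---which is the guard---is already satisfied at $x^{\trace_i}_{k}$; both justifications work.
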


The proof of \Cref{thm:input:trace:inclusion} can be found in \Cref{sec:appendix:correctness}.
A stronger classification of the language of $\constraut$ requires some insight into its discrete structure in relation to the projection automaton of the original system.
Specifically, $\projection{\aut}$ has a finer discrete structure than $\constraut$.

\begin{restatable}[Discrete Refinement]{lemma}{discreterefinement}
\label{lem:discrete:refinement}
For an adequate specification~\spec and set of traces~\traces for a hybrid automaton \aut, the reconstruction $\constraut$ is coarser than the projection of~\aut onto~\traces, \ie, $\exists \reconrel\colon \commitpartition{\constraut}{\reconrel} \discretebisim \projection{\aut}$.
\end{restatable}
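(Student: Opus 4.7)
The plan is to construct an equivalence relation $\reconrel$ on $\modes^+$ that groups constructed modes according to the mode of $\aut$ they represent, and then verify that the quotient $\commitpartition{\constraut}{\reconrel}$ is discretely bisimilar to $\projection{\aut}$. Each mode of $\constraut$ is an $\mergesim$-equivalence class of modes in $\constraut_{\card\traces}$, and each such generated mode $\mode_{i,k}$ encodes the $k$th step of trace $\trace_i$, which via the omniscient lifting $\godtrace_i$ pinpoints a unique mode $\mode^{\godtrace_i}_{k+1}$ of $\aut$ (and, whenever it is traversed, of $\projection{\aut}$ as well). I will define $\tau \from \modes^+ \to \projection{\modes}$ to send each constructed mode to this corresponding original mode, and set $m \reconrel m'$ iff $\tau(m) = \tau(m')$.

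The critical step is well-definedness of $\tau$: whenever two generated modes $\mode_{i,k}$ and $\mode_{j,l}$ are identified by $\mergesim$, they must stem from the same mode of $\aut$. Here the adequacy assumptions enter. \Cref{def:adequacy:partition} ensures that the specification-induced abstraction $\specrel$ respects the discrete structure of $\aut$, while \Cref{def:adequacy:guard} forces any observed abstract state change to pin down exactly the underlying transition. Consequently, two modes sharing an abstract image and an action label must refine the same original mode, which rules out spurious $\actionsim$-merges across distinct original modes; the terminal-similarity case reduces to the same argument restricted to modes with no outgoing transitions.

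Surjectivity of $\tau$ follows from \Cref{lem:connectivity}: for every mode $\mode$ in $\projection{\aut}$, adequacy guarantees enough traversals through $\mode$ that at least one representative with matching incident labels and abstract image survives in $\constraut$. Together with well-definedness, $\tau$ descends to a bijection $\tilde\tau$ from $\commitpartition{\modes^+}{\reconrel}$ to $\projection{\modes}$. It remains to verify that $\tilde\tau$ is a discrete bisimulation. Initial modes correspond because both automata inherit $\initialstate$ from $\aut$. For the forward edge condition, every edge of $\commitpartition{\constraut}{\reconrel}$ descends, via \Cref{lem:lossless:merge} and \Cref{lem:lossless:construction}, from an edge introduced by some trace step, and such an edge is an actual transition of $\aut$ and hence of $\projection{\aut}$. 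Conversely, \Cref{def:proj:aut} states that every edge of $\projection{\aut}$ is witnessed by at least one trace step, which creates a matching edge in $\constraut_{\card\traces}$ that is preserved through the merge.

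The hard part will be the well-definedness of $\tau$: it rests on the fact that distinct original modes within a single abstract state must always be distinguishable via the action labels of their transitions, which is the subtlest consequence of the adequacy criteria and implicitly leverages Criterion \ref{def:adequacy:traceset} to rule out that $\mergesim$ collapses two $\actionsim$-chains belonging to different original modes. Once this is established, the remaining bisimulation verification is a routine unwinding of the construction and merge definitions, combined with the already-proven lossless lemmas.
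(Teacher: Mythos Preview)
Your approach has a genuine gap: the well-definedness of $\tau$ fails. You assert that whenever two generated modes are identified by $\mergesim$ they must stem from the same mode of $\aut$, and you justify this by saying that ``distinct original modes within a single abstract state must always be distinguishable via the action labels of their transitions.'' That claim is false, and nothing in \Cref{def:adequacy} enforces it. The paper's running example is already a counterexample: \textsc{Landing} and \textsc{LandWindy} are distinct modes of $\aut$, lie in the same abstract state, and share both the incoming label $\mathit{descend}$ and the outgoing label $\mathit{adjust}$. Generated modes $\mode_{i,k}$ and $\mode_{j,l}$ corresponding to these two original modes are therefore action-similar (and terminal-similar), get merged in $\constraut$, yet your $\tau$ would have to send the resulting single mode of $\modes^+$ to two different elements of $\projection{\modes}$. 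The paper explicitly notes, right after the proof, that ``the merge criterion might falsely relate modes that are distinct in $\projection\aut$ but share some discrete behavior.'' Hence $\tau$ is not a function, the bijection $\tilde\tau$ does not exist, and the bisimulation argument built on it collapses. The same objection kills your treatment of the terminal case, which you dismiss as routine.

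The paper avoids this obstacle by taking a trace-based rather than a mode-based route: it never attempts to assign a unique original mode to each constructed mode. Instead, for an arbitrary $\godtrace \in \lang{\projection{\aut}}$ it inductively builds a matching $\godtrace' \in \lang{\constraut}$ with the same observable content, invoking \Cref{lem:connectivity} at each step to locate a mode in $\constraut$ carrying the needed incoming/outgoing label pair, and tracking agreement only at the level of the abstract map $\alpha$. The relation $\reconrel$ is then extracted from these trace pairs. This works precisely because the resulting relation is allowed to pair one mode of $\constraut$ with several modes of $\projection{\aut}$; discrete bisimilarity does not require a bijection on modes, only matching transitions from related modes.
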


  \begin{proof}
  The proof proceeds in two steps.
  First, for an arbitrary trace $\trace$ through $\projection\aut$ it generates a trace $\trace'$ through $\constraut$.
  Second, it constructs the equivalence relation $\reconrel$ based on these trace pairs.

  \textsc{Step 1:}
  For a given $\godtrace \in \lang{\projection\aut}$, the proof inductively constructs $\godtrace'$ with $\godtrace' \in \lang{\constraut}$ such that the observable traces for $\godtrace$ and $\godtrace'$ are equal.
  Moreover, for any step $i$: $\modemap(\godtrace[0..i]) = \modemap(\godtrace'[0..i])$  .
The induction base is trivial since both traces originate in the fixed initial state, which corresponds to the initial state of the specification automaton.
\textsc{Induction Step:}
  Suppose the observable traces for $\godtrace[0..i]$ and $\godtrace'[0..i]$ are equal and $\modemap(\godtrace[0..i]) = \modemap(\godtrace'[0..i])$.  
  Suppose further the last action label was $\action$ and the next is $\action'$.
  
  Since the label combination $\action$, $\action'$ appears in $\godtrace$, it is also present in a mode $\mode$ in $\constraut$ by \Cref{lem:connectivity} with $\alpha(\godtrace[0..i]) = \alpha(\godtrace'[0..i]) = \alpha(\mode'')$ .
  Further, $\modemap(\godtrace[0..i])$ has an incoming $\action$ label, so by \Cref{def:merges:action}, $\modemap(\godtrace[0..i]) \actionsim \mode''$ due to their shared action label.
  Hence, by \Cref{lem:lossless:merge}, $\alpha(\godtrace[0..1])$ has an outgoing edge with label $\action'$, which proves that the discrete edge is present.
  
  Now, it suffices to show that $\modemap(\godtrace[0..i]) = \modemap(\godtrace'[0..i])$.
  There are possibilities:  \emph{both} traces take a transition in terms of $\modemap$ or neither one does.  
  This follows from Criterion \ref{def:adequacy:guard} and \Cref{lem:lossless:construction} stating that both automata are refinements of $\finaut^\spec$.
  If both take such a transition, the claim follows because $\finaut^\spec$ is deterministic and both automata refine $\finaut^\spec$.
  If neither one does, both remain in the same state in $\finaut^\spec$; the claim follows from the induction hypothesis.
  \textsc{Step 2:}
  The trace pairs $\godtrace, \godtrace'$ induce an equivalence relation:
  \[ \reconrel\ = \Set{(\modemap(\godtrace[0..i]), \modemap(\godtrace'[0..i])) \given i \in \natural} \]
  This relation is a witness for the claim that $\projection\aut$ is a refinement of $\constraut$ due to the trace inclusion proven in Step 1.  

  \end{proof}	

  Note that the refinement can be a true refinement since the merge criterion might falsely relate modes that are distinct in $\projection\aut$ but share some discrete behavior, as discussed before.

\begin{corollary}\label{cor:mergesim:refines:proj}
  $\reconrel$ is finer than $\mergesim$.
\end{corollary}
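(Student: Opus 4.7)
The plan is to unpack the concrete definition of $\reconrel$ from the proof of \Cref{lem:discrete:refinement} and show directly that any pair it relates already satisfies either \Cref{def:merges:action} or \Cref{def:merges:terminal}. Writing out the induced equivalence on $\modes^+$, $m_1 \reconrel m_2$ means there are two trace pairs $(\godtrace_1, \godtrace_1')$, $(\godtrace_2, \godtrace_2')$ with $\godtrace_k \in \lang{\projection{\aut}}$ and witness $\godtrace_k' \in \lang{\constraut}$ built by Step~1 of that lemma, whose observable projections agree and whose prefixes reach a common mode $\mu$ in $\projection{\aut}$ but endpoints $m_1$ and $m_2$ in $\constraut$ (if they coincide, the claim is trivial). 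The claim then reduces to verifying the two conjuncts of $\mergesim$ at $m_1, m_2$.

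For the first conjunct, $\alpha(m_1) = \alpha(m_2)$, I would appeal to Criteria~\ref{def:adequacy:partition} and \ref{def:adequacy:guard} together with the update rule for $\alpha$ in \Cref{def:constr:step}: because the specification abstracts the original system and agrees on mode-change conditions, the abstract label carried along any witness trace $\godtrace_k'$ stays synchronized with the abstract state of $\finaut^\spec$ reached by $\godtrace_k$. Since both pairs terminate in the same $\mu$, the abstract states coincide, so $\alpha(m_1) = \alpha(m_2)$. For the second conjunct I would split on whether $\mu$ is terminal in $\projection{\aut}$. If so, the step-by-step correspondence of Step~1 forces $m_1$ and $m_2$ to be terminal as well, giving $m_1 \termsim m_2$ directly. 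Otherwise, $\mu$ carries at least one incident action label occurring in the input traces, and the transitions leading into $m_1$ and $m_2$ inherit from $\mu$'s incident labels.

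The main obstacle is the subcase where $m_1$ and $m_2$ are reached via different incoming labels and leave toward different successors, so that no single shared action label is immediately apparent. Here I would invoke \Cref{lem:connectivity}, whose proof turns Criterion~\ref{def:adequacy:traceset} into a graph-connectivity statement on the incident action labels of $\mu$: any two such labels are linked by a chain of $\actionsim$-related modes, all of which collapse into a single $\mergesim$-class in $\constraut$. Consequently $m_1$ and $m_2$ lie in the same class, yielding $m_1 \mergesim m_2$. Combining both conjuncts gives $\reconrel\ \subseteq\ \mergesim$, i.e.\ $\reconrel$ is finer than $\mergesim$.
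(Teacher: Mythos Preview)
The paper states \Cref{cor:mergesim:refines:proj} without proof, treating it as an immediate consequence of the proof of \Cref{lem:discrete:refinement} (specifically, the use of \Cref{lem:connectivity} in Step~1). Your proposal is correct and essentially spells out what the paper leaves implicit: the $\alpha$-agreement via adequacy, the terminal/non-terminal case split, and the appeal to the graph-connectivity argument underlying \Cref{lem:connectivity} to chain $\actionsim$-related modes into a single $\mergesim$-class.

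One point worth tightening: you phrase $m_1,m_2$ as modes ``in $\constraut$'', i.e.\ in the already-merged automaton, but $\mergesim$ is defined on the pre-merge modes $\modes^+_{\card\traces}$, and the corollary only makes sense as a comparison of equivalences there (on $\modes^+$ itself, $\mergesim$ is the identity). Read your argument with $m_1,m_2\in\modes^+_{\card\traces}$ and it goes through cleanly; in particular, your terminal-case claim that ``$m_1$ and $m_2$ are terminal as well'' is justified precisely because a pre-merge mode corresponding to a terminal $\mu$ is the last mode created for its input trace and hence has out-degree zero --- a fact that would \emph{not} survive passage to the merged automaton, where such a mode may have been $\actionsim$-identified with a non-terminal one via a shared incoming label. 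Also note that you are implicitly using the equivalence (transitive) closure of $\mergesim$, which the paper does as well when it quotients by it; this is what lets the chain of $\actionsim$-links from \Cref{lem:connectivity} collapse into a single class.
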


\begin{theorem}[Conservative Construction]\label{thm:conservative:construction}
Let $\aut$ be a hybrid automaton with an adequate set of traces \traces and specification \spec.
The constructed automaton $\constraut$ over-approximates the language of the projection automaton $\projection\aut$: $\lang{\projection{\aut}} \subseteq \lang{\constraut}$.
\end{theorem}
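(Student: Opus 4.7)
The plan is to show $\lang{\projection{\aut}} \subseteq \lang{\constraut}$ by lifting every $\godtrace \in \lang{\projection{\aut}}$ to an omniscient trace $\godtrace'$ through $\constraut$ with the same observable component. The discrete skeleton of $\godtrace'$ is already delivered by Step~1 of the proof of \Cref{lem:discrete:refinement}, which inductively constructs a companion path through $\constraut$ with identical action labels and identical continuous samples. What remains is to verify that this companion path is consistent with the guards and flows of $\constraut$.

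For guard compliance, I would proceed edge by edge. Let $\edge^\godtrace_i$ be the $i$-th edge of $\godtrace$ in $\projection{\aut}$ and $\edge'_i$ its companion in $\constraut$. If $\edge'_i$ arose from a specification mode change, then by \Cref{def:constr:step} its guard equals $\guardspec$ of the corresponding abstract edge, which by \Cref{def:adequacy:guard} is at least as permissive as $\guard(\edge^\godtrace_i)$ in $\aut$; since projection only further tightens $\guard$ to its min/max envelope on $\godtraces$, any value admissible under $\projection{\aut}$ is a fortiori admissible under $\constraut$. For non-specification edges the initial guard is neutral and is only enlarged by convex hulls during merging (\Cref{def:merges:commit}), so compliance is automatic.

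Flow compliance is the core of the argument. By \Cref{def:proj:aut}, $\projection{\flow}(\mode)$ is the min/max envelope of the values produced by $\solve$ on each trace in $\traces$ that traverses $\mode$. Each such value is, by \Cref{def:constr:step}, injected into some mode of $\constraut_{\card{\traces}}$; \Cref{lem:connectivity} together with \Cref{cor:mergesim:refines:proj} guarantees that all of these modes collapse into a single $\mergesim$-class, whose representative is exactly the mode visited by the companion path chosen in Step~1. By \Cref{def:merges:commit} the flow of this representative is the convex hull of those $\solve$-values, which is at least as permissive as their min/max envelope. Hence every continuous step $x^\godtrace_i + \delay^\godtrace_i \cdot f$ legal in $\projection{\aut}$ is also legal in $\constraut$.

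The main obstacle I anticipate lies in this last alignment: I must ensure that the companion mode chosen by Step~1 of \Cref{lem:discrete:refinement} is exactly the representative of the $\mergesim$-class containing all flow-relevant $\solve$-values, and not some other class with a weaker hull. Carefully chaining \Cref{lem:connectivity}, which yields an action-similar mode in $\constraut$ for each mode of $\aut$, with \Cref{cor:mergesim:refines:proj}, which collapses the $\reconrel$-witness onto a $\mergesim$-representative, is what makes the envelope-to-hull step tight. The guard side, by contrast, is almost routine because adequacy already engineers the specification guards to dominate those of the true system.
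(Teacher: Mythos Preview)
Your proposal is correct and follows essentially the same route as the paper's own proof: both take a trace $\godtrace \in \lang{\projection{\aut}}$, use the companion-path construction from Step~1 of \Cref{lem:discrete:refinement} to obtain the discrete skeleton in $\constraut$, and then separately verify guard compliance (by the spec-change vs.\ no-spec-change case split, invoking Criterion~\ref{def:adequacy:guard} for the former and the neutral guard $\neutralelem$ for the latter) and flow compliance (by locating the extremal $\solve$-witnesses in $\traces$ and arguing they land in the same $\mergesim$-class as the companion mode).

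The one cosmetic difference is that for the flow collapse you invoke \Cref{lem:connectivity} directly, whereas the paper routes the argument through $\reconrel$ via \Cref{lem:discrete:refinement} and then \Cref{cor:mergesim:refines:proj}. Both are valid: the connectivity lemma's proof already establishes that every mode of $\constraut_{\card\traces}$ arising from a traversal of $\mode$ in $\aut$ is $\actionsim$-related and hence merged, which is exactly what you need. The paper's phrasing via $\reconrel$ is slightly cleaner because it makes explicit that the companion mode $\mode^+$ and the flow-witness modes $\mode^{a^\uparrow},\mode^{a^\downarrow}$ are identified by the \emph{same} relation that witnesses the discrete refinement, so the alignment you flag as an obstacle is automatic. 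Your version has to argue this alignment through the shared incoming action label, which works but is a touch more ad hoc. Also note that the paper makes the third guard case explicit (exactly one side crosses a specification boundary) and rules it out via Criterion~\ref{def:adequacy:partition}; in your write-up this is implicit in the $\modemap$-agreement guaranteed by Step~1, which is fine but worth stating.
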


\ifthenelse{\boolean{hideproofs}}{}{
  \begin{proof}
  Let $\godtrace \in \lang{\projection{\aut}}$.  
  The proof constructs a trace $\trace \in \lang\constraut$ such that $\trace$ is an observable counterpart for $\godtrace$.
  The initial real-valued state of $\trace$ is $x_0^\godtrace$ with $\modemap(\trace[1]) = \initial{\mode^+}$.
  By construction, $\initial{\mode^+} \reconrel \projection{\initial\mode} = \mode_0^\godtrace$.
  
  For the induction, consider a delay transition $x^\godtrace_i, \mode^\godtrace_i, \delay^\godtrace_i, x^\godtrace_{i+1}$ where $\mode^\godtrace_i \reconrel \mode^+$ by \Cref{lem:discrete:refinement}.  
  Let $x^\godtrace_i + f \delay^\godtrace_i = x^\godtrace_{i+1}$ for $f \in \rn$.
  By definition of the projection automaton (\Cref{def:proj:aut}), \traces contains traces $\trace^\uparrow$ and $\trace^\downarrow$ exhibiting the flow $f^\uparrow$ and $f^\downarrow$ at point $a^\uparrow \leq \card*{\trace^\uparrow}$ and $a^\downarrow \leq \card*{\trace^\downarrow}$, respectively, while traversing $\mode^\godtrace_i$ with $f^\downarrow \leq f \leq f^\uparrow$.
   By construction, there are modes $\mode^{a^\uparrow} \in \modes^+_{a^\uparrow}$ and $\mode^{a^\downarrow} \in \modes^+_{a^\downarrow}$ with $flow_{a^\uparrow}(\mode^{a^\uparrow}) = f^\uparrow$ and $flow_{a^\downarrow}(\mode^{a^\downarrow}) = f^\downarrow$.
  \Cref{lem:lossless:construction,lem:lossless:merge} guarantee that further construction steps and merges retain this information. 
  Moreover, \Cref{lem:discrete:refinement} implies that $\mode^{a^\uparrow}$, $\mode^{a^\downarrow}$, and $\mode^+$ are equal with respect to $\reconrel$.
  By \Cref{cor:mergesim:refines:proj}, they are also equal with respect to $\mergesim$.
  Thus, $\flow(\mode^+) = \Convex{\Set{\flow(\mode)\given\mode\in\eqclass^+}}$ with $f^\downarrow, f^\uparrow, f \in \flow(\mode^+)$.
  As a result, $\trace$ may contain the subsequence $x^\godtrace_i, \delay^\godtrace_i, x^\godtrace_{i+1}$ representing the delay transition.
  
  For discrete transitions, consider $\edge = (\mode^\godtrace_i, \action^\godtrace_i, \mode^\godtrace_{i+1})$. 
  We show that $\edge^+ = (\mode_s^+, \action^\godtrace_i, \mode_t^+)$ is a valid transition assuming that $\mode_s^+ = \modemap(\trace[0..i])$, \ie, the trace constructed so far ended in $\mode_s^+$.
  There are three cases:
  
  \textsc{Case a)} Both $\mode^\godtrace_i \specrel \mode^\godtrace_{i+1}$ and $\mode_s^+ \refinerel \mode_t^+$. 
  Intuitively, this means that both automata remain in the same state of the specification automaton.
  In this case, by construction: $\guard_i(\edge^+) = \guardspec(\alpha_{i}(\edge^+)) = \neutralelem$ and by \Cref{lem:lossless:construction,lem:lossless:merge}: $\guard_i(\edge^+) \subseteq \guard^+(\edge^+)$.  
  Thus, the guard is satisfied trivially.
  The existence of the edge in the constructed automaton follows from \Cref{lem:discrete:refinement}.
  
  \textsc{Case b)} Neither $\mode^\godtrace_i \specrel \mode^\godtrace_{i+1}$ nor $\mode_s^+ \refinerel \mode_t^+$. 
  Intuitively, this means neither automaton remains in the same state of the specification automaton.
  In this case, $\guard_i(\edge^+) = \guardspec(\alpha_{i}(\edge^+))$.
  By Criterion \ref{def:adequacy:guard} and \Cref{def:proj:aut}, we know that $\projection{\guard}(\edge) \implies \guard(\edge)$ and $\guard(\edge) \implies  \guardspec(\alpha_{i}(\edge^+))$.
  Again, by \Cref{lem:lossless:construction} and \Cref{lem:lossless:merge} we know $\guard_i(\edge^+) \subseteq \guard^+(\edge^+)$ and the existence of the edge in the constructed automaton follows from \Cref{lem:discrete:refinement}.
  
  \textsc{Case c)} Either $\mode^\godtrace_i \not\specrel \mode^\godtrace_{i+1}$ or $\mode_s^+ \not\refinerel \mode_t^+$ but not both.
  This case is impossible for adequate specifications (Definition \ref{def:adequacy:partition}) and by the definition of $\specrel/\refinerel$.
  
  Thus, the discrete transition exists and is applicable in the reconstructed automaton.
  This concludes the proof.
  \end{proof}
}

\begin{figure*}[t]
	\begin{subfigure}[b]{0.52\linewidth}
	\centering
\tikzstyle{state}=[draw, rectangle, fill=none, align=left, minimum width=2.5cm, 
minimum height=1.5cm, rounded corners=1mm]
\resizebox{1\linewidth}{!}{
  \begin{tikzpicture}[->,>=stealth',shorten >= 1pt,auto,node distance=4.2cm,every node/.style={transform shape}]
  
  \node[state, fill=white, initial above, initial text=] (stablespeed) {%
  	\modename{Stable} \\
  	$\dot{v} \in [0,0]$
  };
  
  \node[state, fill=white] (jump) [right=2.5 of stablespeed] {%
  	\modename{Jump} \\
  	$\dot{v} \in [-1,-1]$
  };
  
  \node[state, fill=white] (drop) [left=2.5 of stablespeed] {%
  	\modename{Drop} \\
  	$\dot{v} \in [1,1]$
  };
  @
  \node[state, fill=white] (rampup) [below left=2.69 and -0.5 of stablespeed] {%
  	\modename{RampUp} \\
  	$\dot{v} \in [2,2]$
  };
  
  \node[state, fill=white] (counterdrop) [below of=drop] {%
  	\modename{CounterDrop} \\
  	$\dot{v} \in [-0.5,-0.5]$
  };
  
  \node[state, fill=white] (counterjump) [below of=jump] {%
  	\modename{CounterJump} \\
  	$\dot{v} \in [0.5,0.5]$
  };
  
  \node[state, fill=white] (slowdown) [below right=2.69 and -0.5 of stablespeed] {%
  	\modename{SlowDown} \\
  	$\dot{v} \in [-0.5,-0.5]$
  };
  
  \begin{scope}[-,on background layer]
    \filldraw[fill=prefix!20, draw=prefix!70!black, rounded corners=5pt]
      ($(stablespeed.north west)+(-0.2,0.5)$)  -- ($(stablespeed.north east)+(0.2,0.5)$) -- ($(stablespeed.south east)+(0.2,-0.5)$) -- ($(stablespeed.south west)+(-0.2,-0.5)$) -- cycle node[anchor=north west] at ($(stablespeed.north west)+(-0.2,0.5)$) {$\color{prefix!70!black} \textsc{Stable}$};
  \end{scope}
  
  \begin{scope}[-,on background layer]
    \filldraw[fill=prelude!20, draw=prelude!70!black, rounded corners=5pt]
      ($(drop.north west)+(-0.2,0.5)$) -- ($(drop.north east)+(0.2,0.5)$) -- ($(counterdrop.south east)+(0.2,-0.5)$) -- ($(counterdrop.south west)+(-0.2,-0.5)$) -- cycle node[anchor=north west] at ($(drop.north west)+(-0.2,0.5)$) {\color{prelude!70!black} \textsc{Drop}};
  \end{scope}
  
  \begin{scope}[-,on background layer]
    \filldraw[fill=loop!20, draw=loop!70!black, rounded corners=5pt]
      ($(jump.north west)+(-0.2,0.5)$) -- ($(jump.north east)+(0.2,0.5)$) -- ($(counterjump.south east)+(0.2,-0.5)$) -- ($(counterjump.south west)+(-0.2,-0.5)$) -- cycle node[anchor=north west] at ($(jump.north west)+(-0.2,0.5)$) {\color{loop!70!black} \textsc{Jump}};
  \end{scope}
  
  \begin{scope}[-,on background layer]
    \filldraw[fill=postfix!20, draw=postfix!70!black, rounded corners=5pt]
      ($(rampup.north west)+(-0.2,0.5)$) -- ($(slowdown.north east)+(0.2,0.5)$) -- ($(slowdown.south east)+(0.2,-0.5)$) -- ($(rampup.south west)+(-0.2,-0.5)$) -- cycle node[anchor=north west] at ($(rampup.north west)+(-0.2,0.5)$) {\color{postfix!70!black} \textsc{RampUp}};
  \end{scope}
  
  \path (stablespeed) edge node[above] {%
  		$\mathit{torqueUp}$
  	  } (jump)
  	  (stablespeed) edge node[above] {%
  	  	$\mathit{torqueDown}$
  	  } (drop)
  	  (stablespeed) edge node[pos=.84,above left] {%
  	  	$\mathit{throttleUp}$
  	  } (rampup)
  	  (drop) edge node[left] {%
  	  	$\mathit{counter}$
  	  } (counterdrop)
  	  (counterdrop) edge node[pos=.6,above left] {%
  	  	$\mathit{stabilize}$
  	  } (stablespeed)
  	  (jump) edge node[right] {%
  	  	$\mathit{counter}$
  	  } (counterjump)
  	  (counterjump) edge node[pos=.6,above right] {%
  	  	$\mathit{stabilize}$
  	  } (stablespeed)
  	  (rampup) edge node[below] {%
  	  	$\mathit{counter}$
  	  } (slowdown)
  	  (slowdown) edge node[pos=.16,above right] {%
  	  	$\mathit{stabilize}$
  	  } (stablespeed);
  
  \end{tikzpicture}
}
    	\caption{Hybrid automaton approximating the engine timing control system. A single trace of length ten enables perfect reconstruction. Colored states indicate the specification automaton.}
    	\label{fig:borzoo:bumps}
	\end{subfigure}
	\quad\quad\quad
	\begin{subfigure}[b]{0.33\linewidth}
	\centering
\tikzstyle{state}=[draw, rectangle, fill=none, align=left, minimum width=1cm, 
minimum height=1cm, rounded corners=1mm]
\resizebox{1\linewidth}{!}{
\begin{tikzpicture}[->,>=stealth',shorten >= 1pt,auto,node distance=3cm,every node/.style={transform shape},scale=0.5]

\node[state, initial above, initial text=] (s12) {%
	\large \modename{s12}
};

\node[state] (s11) [below of=s12] {%
	\large \modename{s11}
};

\node[state] (s8) [below of=s11] {%
	\large \modename{s8}
};

\node[state] (c2) [left of=s8] {%
	\large \modename{c2}
};

\node[state] (c5) [below of=s8] {%
	\large \modename{c5}
};

\node[state] (s1) [left of=c5] {%
	\large \modename{s1}
};

\node[state] (c7) [below of=c5] {%
	\large \modename{c7}
};

\node[state] (s2) [right of=c7] {%
	\large \modename{s2}
};

\node[state] (c9) [right of=s2] {%
	\large \modename{c9}
};

\node[state] (s4) [above of=c9] {%
	\large \modename{s4}
};

\node[state] (s0) [above of=s4] {%
	\large \modename{s0}
};

\node[state] (s3) [right of=s0] {%
	\large \modename{s3}
};

\node[state] (s5) [above of=s0] {%
	\large \modename{s5}
};

\node[state] (c1) [left of=s0] {%
	\large \modename{c1}
};

\path (s12) edge[bend right=20] node[left] {%
	  	\large $\mathit{lock?}$
	  } (s11)
	  (s11) edge[bend right=20] node[right] {%
	  	\large $\mathit{unlock?}$
	  } (s12)
	  (s11) edge[bend right=20] node[left] {%
	  	\large $\mathit{close?}$
	  } (s8)
	  (s8) edge[bend right=20] node[right] {%
	  	\large $\mathit{open?}$
	  } (s11)
	  (s8) edge[bend right=20] node[above] {%
	  	\large $\mathit{unlock?}$
	  } (c2)
	  (c2) edge[bend right=20] node[below] {%
	  	\large $\mathit{lock?}$
	  } (s8)
	  (s12) edge[out=180,in=90] node[pos=.6,right] {%
	  	\large $\mathit{close?}$
	  } (c2)
	  (c2) edge[out=135,in=140] node[left] {%
	  	\large $\mathit{open?}$
	  } (s12)
	  (s8) edge node[left] {%
	  	\large $\mathit{armedOn!}$
	  } node[right] {%
	  	\large $c0 \geq 2$
	  } (c5)
	  (c5) edge node[below] {%
	  	\large $\mathit{unlock?}$
	  } (s1)
	  (s1) edge node[below left] {%
	  	\large $\mathit{armedOff!}$
	  } (c2)
	  (c5) edge node[above right] {%
	  	\large $\mathit{open?}$
	  } (c7)
	  (c7) edge node[below] {%
	  	\large $\mathit{armedOff!}$
	  } (s2)
	  (s2) edge node[below] {%
	  	\large $\mathit{flashOn!}$
	  } (c9)
	  (c9) edge node[right] {%
	  	\large $\mathit{soundOn!}$
	  } (s4)
	  (s4) edge[bend right=30] node[below right] {%
	  	\large $\mathit{unlock?}$
	  } (s3)
	  (s3) edge[bend right=30] node[above right] {%
	  	\large $\mathit{soundOff!}$
	  } (s5)
	  (s4) edge node[left] {%
	  	\large $\mathit{soundOff!}$
	  } node[right] {%
	  	\large $c0 \in [3,13]$
	  } (s0)
	  (s0) edge node[right] {%
	  	\large $\mathit{unlock?}$
	  } (s5)
	  (s5) edge[bend right=25] node[above right] {%
	  	\large $\mathit{flashOff!}$
	  } (s12)
	  (s0) edge node[above] {%
	  	\large \large $\mathit{flashOff!}$
	  } node[below] {%
	  	\large $c0 \geq 30$
	  } (c1)
	  (c1) edge node[below] {%
	  	\large $\mathit{close?}$
	  } (s8)
	  (c1) edge[out=75,in=340] node[above right] {%
	  	\large $\mathit{unlock?}$
	  } (s12);

\end{tikzpicture}
}
  	   \caption{Timed automaton for the car alarm system. Perfect reconstruction requires seven traces of length twelve.}
  	   	\label{fig:cas}
    \end{subfigure}
	\caption{
		Example automata by Medhat~et~al.~\cite{MiningFramework}~(a) and Tappler~et~al.~\cite{learningtimedgenetic}~(b).
	}	
	\label{fig:automata}
\end{figure*}
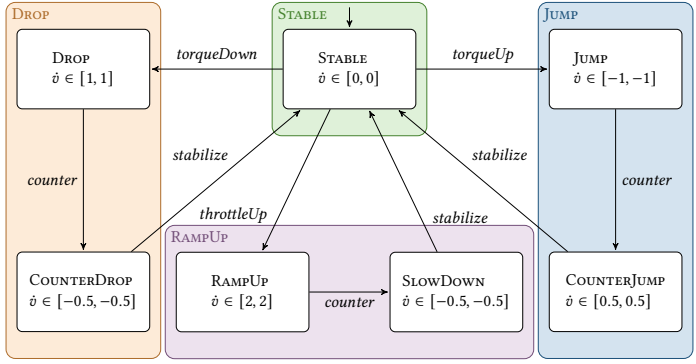
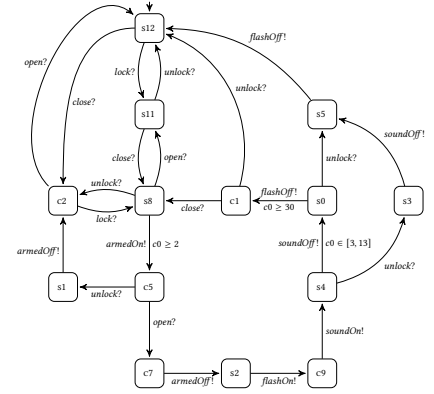

\section{Experiments}\label{sec:eval}
The empirical evaluation shows the scalability and precision of the approach presented in this paper. 
It is based on a prototype implementation in Rust\footnote{\url{https://www.rust-lang.org}} and the code is open source.
All experiments were conducted on an Intel i5-7200u with 8GB RAM.

\subsection{Aircraft System}\label{sec:eval:aircraft}

As a first proof of concept, consider the running example from \Cref{fig:ha:drone}.
For adequate input traces, the output will always be structurally equal with varying dynamics.
This can be seen in \Cref{fig:ha:drone_output}, which shows the results of two construction processes.
The dynamics in black are constructed from three hand-picked traces of length eight.
Two of these traces travers all three \textsc{Travel} modes, whereas the last one skips the course adjustment modes and loops in one of the \textsc{Landing} modes instead.
 As can be seen, by picking the state values for the traces in such a way that they represent the extreme behavior, the reconstruction of the dynamics is perfect.
 Conversely, the constructed dynamics based on an adequate trace set of ten traces obtained by conducting random walks on the original system is shown in gray.
The traces can be found in \Cref{app:aircraft:traces}.
Evidently, the reconstruction closely resembles the original system both structurally and in terms of dynamics despite being based on a small set of random traces.

\subsection{Scalability}\label{sec:eval:scalability}

Recall the complexity of each step of the construction algorithm, \ie, extraction, construction, and merging, from \Cref{sec:complexity}.
The extraction only requires a single pass over the specification and is thus negligible.
The construction and merges depend on the dimensionality of the system and the number and length of traces. 
The merge also depends on the number of equivalence classes with respect to $\mergesim$ in the best case, which is the size of the output automaton.

For this reason, the scalability evaluation considers exactly these three factors: dimensionality, number and length of traces, and output-size.
To this end, it automatically generates an automaton with matching specification and adequate trace set.
The automaton is shaped like a binary tree of variable depth $d$ (scales the length of traces) where each of the $2^{d+1}-1$ nodes is a control mode with dynamics of variable dimension (scales the dimensionality).
The specification summarizes a variable number of modes with equal depth (scales the output size) and generates a variable number of adequate traces (scales the number of traces) enabling the respective merges.

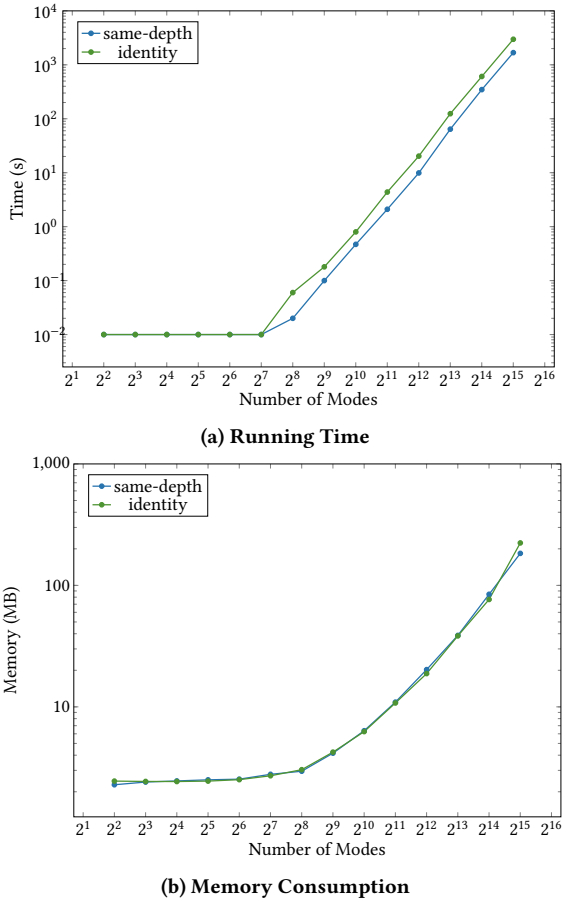
\begin{figure}[t]
	\begin{subfigure}[b]{0.49\textwidth}
		\centering
\tikzstyle{layerpoint}=[circle,draw=loop,fill=loop,inner sep=0pt,minimum size=0.15cm]
\tikzstyle{idpoint}=[circle,draw=prefix!80!black,fill=prefix!80!black,inner sep=0pt,minimum size=0.15cm]
\pgfplotsset{compat=1.5}
\centering
\resizebox{0.85\linewidth}{!}{
	\begin{tikzpicture}[font=\Huge]
		\begin{axis}[
			ymax=10000,
			height=1.5\textwidth,
			width=2\textwidth,
			xlabel=Number of Modes,
			xmode=log,
			log basis x={2},
			ylabel=Time (\SI{}\second),
			yticklabel style={text width=width("$10^{-100}$"),align=left},
			ymode=log,
			log basis y={10},
			legend pos=north west
		]
		\addplot[mark=*,color=loop,very thick] coordinates {
			(4,0.01)
			(8,0.01)
			(16,0.01)
			(32,0.01)
			(64,0.01)
			(128,0.01)
			(256,0.02)
			(512,0.1)
			(1024,0.47)
			(2048,2.1)
			(4096,9.9)
			(8192,64.14)
			(16384,347.15)
			(32768,1681.66)
		};
		\addplot[mark=*,color=prefix!80!black,very thick] coordinates {
			(4,0.01)
			(8,0.01)
			(16,0.01)
			(32,0.01)
			(64,0.01)
			(128,0.01)
			(256,0.06)
			(512,0.18)
			(1024,0.8)
			(2048,4.38)
			(4096,20.3)
			(8192,124.13)
			(16384,608.11)
			(32768,2975.65)
		};
		\legend{same-depth, identity}
			
		\end{axis}
	\end{tikzpicture}
}
		\caption{Running Time
		}
		\label{fig:eval:scalability:depth_time}
	\end{subfigure}\hfill
	\begin{subfigure}[b]{0.49\textwidth}
		\centering
\tikzstyle{layerpoint}=[circle,draw=loop,fill=loop,inner sep=0pt,minimum size=0.15cm]
\tikzstyle{idpoint}=[circle,draw=prefix!80!black,fill=prefix!80!black,inner sep=0pt,minimum size=0.15cm]
\pgfplotsset{%
	compat=1.5,
	log y ticks with fixed point/.style={
      yticklabel={
        \pgfkeys{/pgf/fpu=true}
        \pgfmathparse{10^\tick}%
        \pgfmathprintnumber[fixed relative, precision=3]{\pgfmathresult}
        \pgfkeys{/pgf/fpu=false}
      }
  }
}
\centering
\resizebox{0.87\linewidth}{!}{
	\begin{tikzpicture}[font=\Huge]
		\begin{axis}[
			ymax=1000,
			height=1.5\textwidth,
			width=2\textwidth,
			xlabel=Number of Modes,
			xmode=log,
			log basis x={2},
			log y ticks with fixed point,
			ylabel=Memory (\SI{}{\mega\byte}),
			ymode=log,
			log basis y={10},
			legend pos=north west
		]
		\addplot[mark=*,color=loop,very thick] coordinates {
			(4,2.284)
			(8,2.404)
			(16,2.456)
			(32,2.508)
			(64,2.54)
			(128,2.784)
			(256,2.952)
			(512,4.156)
			(1024,6.364)
			(2048,10.94)
			(4096,20.264)
			(8192,38.772)
			(16384,84.496)
			(32768,183.464)
		};
		\addplot[mark=*,color=prefix!80!black,very thick] coordinates {
			(4,2.448)
			(8,2.432)
			(16,2.428)
			(32,2.448)
			(64,2.516)
			(128,2.708)
			(256,3.04)
			(512,4.24)
			(1024,6.256)
			(2048,10.764)
			(4096,18.792)
			(8192,38.408)
			(16384,76.576)
			(32768,223.62)
		};
		\legend{same-depth, identity}
			
		\end{axis}
	\end{tikzpicture}
}
		\caption{Memory Consumption
		}
		\label{fig:eval:scalability:depth_mem}
	\end{subfigure}
	\caption{The results of the scalability analysis for different sizes of the original automaton and specifications enabling many (blue) or no (green) merges.
	}
	\label{fig:eval:scalability:all}
\end{figure}

\Cref{fig:eval:scalability:depth_time} and \Cref{fig:eval:scalability:depth_mem} show the running time and memory consumption for varying sizes of the original automaton.
The green line represents runs where $\mergesim$ only equates identities, prohibiting any merges.
For the blue line, two modes are equal if they have the same depth in the underlying automaton.
The number and size of the traces required for an adequate trace set scales linearly with the size and the depth, respectively.
Independent of the existence of merges, the running time lies below a second for automata with less than $2^{10}$ modes and increases steadily afterwards --- as expected considering the asymptotic complexity.
Even for automata with $2^{15}$ modes, the construction terminated after less than an hour (green line) or half an hour (blue line).
The memory consumption behaves similarly, starting to rise significantly around $2^7$ owing both to the increased number of traces stored in memory, and the resulting size of $\constraut_{\card\traces}$.
Note that the memory consumption almost exclusively stems from the construction process; merging only deallocates memory.
In terms of running time, the lion's share comes from merging due to the difference in time-complexity. 
At a size of $2^9$ modes the running time of the construction process amounts to 3.061\%  and decreases to 0.015\% for automata with $2^{15}$ modes.

The dimensionality impacts the running time to a lesser extent.
Raising the dimension from 1,000 to 7,000 for an automaton size of $10^{10}$ increases the running time from around \SI{4}{\second} to \SI{14}{\second}. 
The limiting factor here is the memory consumption: each additional dimension increases the memory consumption of every guard condition, mode, and trace step.	
As a result, 10,000 dimensions requires around \SI{5}{\giga\byte} of memory, which is also reflected in a relatively steep increase in running time to \SI{96}{\second}.
Lastly the number of traces has an almost identical impact on both the running time and memory consumption.
The scale of the impact lies in-between the one of the dimensionality and the size of the automaton.
Raising the number of traces from 3,000 to 16,000 increases the running time roughly 40-fold.
Detailed results for the dimensionality and number of traces can be found in \Cref{app:eval}.

\subsection{Comparison Against Other Approaches}

The construction of a hybrid automaton requires the determination of both the discrete structure and continuous behavior.
Regarding the former, the conservative construction relies on the information provided by the specification and refines abstract states according to trace information.
This restricts revisions to a local level. 
In absence of such a specification, other approaches resort to learning algorithms.
Medhat~\etal~\cite{MiningFramework} use a modification of Angluin's $L^\ast$ algorithm~\cite{Angluin} to learn the discrete structure separately from the dynamics.
Tappler~\etal~\cite{learningtimedgenetic} on the other hand learn a timed automaton with genetic programming.
Note that these automata alleviate the need to infer dynamics. 
We will evaluate the conservative construction against both of these approaches.

\paragraph{Anguin's $L^\ast$ and Clustering.}
Medhat~\etal~\cite{MiningFramework} use an adaptation of $L^\ast$ and clustering to identify the discrete and continuous behavior of a system, respectively.
In their case study, they use a Simulink model of a closed-loop engine timing control system.
\Cref{fig:borzoo:bumps} shows an approximate representation of the system as hybrid model.
Their construction uses eight traces to generate an automaton that resembles the underlying system for new traces up to an error of 2.6\%.
Note that a detailed comparison is exacerbated by a lack of information regarding the running time, memory consumption, and length of the input traces.
The conservative construction can perfectly reconstruct the automaton with one hand-picked trace of length ten within less than \SI{1}{\milli\second}.
When using random walks, an average of 35 traces of length fifteen suffices for the perfect reconstruction.
In this example, the specification always summarizes modes belonging to a certain operation of the system, \ie, a drop, jump, ramp-up and the stable configuration.

\paragraph{Genetic Programming.}
Tappler~\etal~\cite{learningtimedgenetic} use genetic programming to successively adapt a candidate automaton to encompass all input traces.
As an example, they consider a timed automaton modeling a car alarm system~(see \Cref{fig:cas}).
A sufficiently precise reconstruction requires 2,000 randomly generated traces and took a mean of around \SI{100}{\minute}.
When using seven hand-selected traces, the conservative construction can perfectly reconstruct the system within less than \SI{1}{\milli\second}, disregarding resets.
With random walks, an average of 35 traces of length fifteen is necessary for the perfect reconstruction.
Note that the nature of the car alarm system does not allow for a meaningful summary of several modes in the specification.
However, in terms of performance, the specification is irrelevant owing to the small size of the system.

\section{Related Work}\label{sec:rw}
The theory of hybrid automata was first studied by Henzinger~\cite{TheoryOfHybridAutomata} as the real-time extension of timed automata~\cite{TheoryOfTimedAutomata}. 
Learning the complex structure of timed and hybrid automata is a line of research that resulted in deterministic and stochastic reconstruction algorithms.

Niggemann~\etal~\cite{LearningBehaviourModels} present the tool \textsc{HyBUTLA} that builds prefix trees of the traces and applies merges when appropriate.
Since Angluin's $L^\ast$ algorithm~\cite{Angluin} is a prominent solution for learning discrete automata, several extensions for timed automata were proposed~\cite{learningoneclock,event-recording-automata}.
Based on that, Medhat~\etal~\cite{MiningFramework} split the learning process of a hybrid automaton into two steps.
They first learn the discrete model of the automaton with $L^\ast$ and then capture the dynamics using clustering.
Both of these techniques can potentially be replaced or integrated into different frameworks.
Hence, their approach is complementary to the conservative construction: substituting the clustering for the simpler $\solve$ function can yield better precision at the price of conservativeness.
Focusing on the medical application domain, \textsc{HyMN}~\cite{HyMN} learns patient specific parameters for hybrid automata deterministically.
Soto~\etal~\cite{synthesis} synthesize a hybrid automaton with an online algorithm without relying on a specification as discrete template.
While precision is very high and completeness is shown, learning a trace prompts a global analysis of the previously learned hybrid automaton, which incurs performance penalties.
The conservative construction avoid this complexity by using the specification automaton and the adequacy criterion.
This way, revisions are local and still retain correctness guarantees.
Other approaches for learning hybrid automata using mathematical models for node identification were proposed by Summerville~\etal~\cite{CHARDA} and Breschi~\etal~\cite{linearseperation}.

If large datasets of traces are available, stochastic learning of hybrid automata is feasible.
Tappler~\etal~\cite{learningtimedgenetic} use genetic programming to reconstruct timed automata both in an offline and online setting~\cite{learningtimedgenetic2}.
Santana~\etal~\cite{PHA} build hybrid automata with the Expectation-Maximization algorithm to iteratively define the model parameters.
An unsupervised learning approach was presented by Lee~\etal~\cite{UnsupervisedLearning}, whereas Birgelen and Niggemann~\cite{MLHybridSystems} use self-organizing feature maps.
Despite the success of machine learning, the results do not provide provable guarantees.

\section{Conclusion}\label{sec:conclusion}

This paper presented a construction algorithm for conservative hybrid automata from development artifacts in the shape of a runtime monitoring specification and pre-recorded execution traces.
The construction is validated mathematically by proving that the result is an over-approximation under certain assumptions on the inputs. 
An additional empirical evaluation revealed both the extraordinary scalability of the construction and that even randomly generated inputs regularly satisfy the input requirements.
Considering that --- in a realistic setting --- these inputs are high-quality artifacts acquired during development of the system, they should no longer be left under-utilized.
Treating them as the valuable assets they are allows for constructing precise, conservative hybrid automata in a scalable fashion.
%

%

\bibliography{bibliography}
\newpage
\begin{appendix}

\section{Adequate Traces of Aircraft System}\label{app:aircraft:traces}

This section of the appendix presents the adequate traces used to learn the version of the aircraft system depicted in \Cref{fig:ha:drone_output} with the dynamics written in black.

\newcommand\scalefactortraces{.7}
\noindent
\begin{minipage}[t]{0.33\linewidth}
  \scalebox{\scalefactortraces}{
    \begin{tabular}{ll}
      & (0,0,0) \\
      $\myrightarrow{\mathit{cruise},300}$ & (300,0,300) \\
      $\myrightarrow{\mathit{turnL},5}$ & (750,0,290) \\
      $\myrightarrow{\mathit{LtoS},5}$ & (1200,-750,280) \\
      $\myrightarrow{\mathit{turnL},5}$ & (1650,-750,270) \\
      $\myrightarrow{\mathit{LtoS},5}$ & (2100,-1500,260) \\
      $\myrightarrow{\mathit{turnR},5}$ & (2550,-1500,250) \\
      $\myrightarrow{\mathit{RtoS},5}$ & (3000,-1500,240) \\
      $\myrightarrow{\mathit{descend},5}$ & (3450,-1500,230) \\
      $\myrightarrow{5}$ & (3450,-1500,150)
    \end{tabular}
  }
\end{minipage}
\begin{minipage}[t]{0.33\linewidth}
  \scalebox{\scalefactortraces}{
    \begin{tabular}{ll}
      & (0,0,0) \\
      $\myrightarrow{\mathit{cruise},10}$ & (1000,0,300) \\
      $\myrightarrow{\mathit{turnR},5}$ & (2500,0,310) \\
      $\myrightarrow{\mathit{RtoS},5}$ & (4000,750,320) \\
      $\myrightarrow{\mathit{turnR},5}$ & (5500,750,330) \\
      $\myrightarrow{\mathit{RtoS},5}$ & (7000,1500,340) \\
      $\myrightarrow{\mathit{turnL},5}$ & (8500,1500,350) \\
      $\myrightarrow{\mathit{LtoS},5}$ & (10000,1500,360) \\
      $\myrightarrow{\mathit{descend},5}$ & (11500,1500,370) \\
      $\myrightarrow{5}$ & (12500,1500,370)
    \end{tabular}
  }
\end{minipage}
\begin{minipage}[t]{0.33\linewidth}
  \scalebox{\scalefactortraces}{
    \begin{tabular}{ll}
      & (0,0,0) \\
      $\myrightarrow{\mathit{cruise},20}$ & (1000,0,300) \\
      $\myrightarrow{\mathit{descend},5}$ & (2000,0,300) \\
      $\myrightarrow{\mathit{adjust},5}$ & (2375,0,275) \\
      $\myrightarrow{\mathit{adjust},5}$ & (2750,0,250) \\
      $\myrightarrow{\mathit{adjust},5}$ & (3125,0,225) \\
      $\myrightarrow{\mathit{adjust},5}$ & (3500,0,200) \\
      $\myrightarrow{\mathit{adjust},5}$ & (3875,0,175) \\
      $\myrightarrow{\mathit{adjust},5}$ & (4250,0,150) \\
      $\myrightarrow{5}$ & (4625,0,125)
    \end{tabular}
  }
\end{minipage}

\section{Proofs of Section 5}\label{sec:appendix:correctness}
This section contains the missing proofs of \Cref{sec:correctness}.

\perfectprojection*
\begin{proof}
  The proof selects traces from $\lang\aut$ enabling the perfect projection.
  For each $\edge$ in $\edges$, $\godtraces^\ast$ contains a trace $\godtrace$ with $\edge \in \godtrace$.
  This immediately entails that the projected edge set, set of actions, set of modes, and initial mode are accurate when disregarding unreachable entities.
  For each mode, $\godtraces^\ast$ encompasses four traces per dimension: one minimizing and one maximizing the flow and continuous state variable of the mode and dimension.  
  The minimization and maximization is over the set of traces rather than over the mode itself.  
  As a result, the projection of the flow is perfect.
  Lastly, for each mode, outgoing edge, and dimension, there are two traces in $\godtraces^\ast$ which maximize and minimize the continuous state value before taking the transition.
  Hence, the projection of the guard condition is lossless in terms the language of the automaton.

  In conclusion, $\lang{\projection{\aut}[\godtraces^\ast]} = \lang\aut$ with \[\card*{\godtraces^\ast} = \card \edges + n(5\card \modes + \sum_{\mode \in \modes} \outdeg(\mode))\] where $n$ is the dimension of $\aut$.  \end{proof}
  
\losslessmerge*

\begin{proof} By contradiction: 	Assume there is a trace $\trace \in \lang{\constraut} \setminus \lang{\commitpartition{\constraut}{\bisim}}$.
  As the merge operation is defined by unifying modes, $\trace$ either (1) takes a discrete transition or (2) traverses a continuous state not permitted in the merged automaton.
  \Cref{def:merges:commit} \enquote{bends} edges such that they originate and end in the respective representatives.  
  Hence, the construction retains all edges up to elimination of duplicates due to set semantics, ruling out (1).
  Regarding (2), \Cref{def:merges:commit} builds the convex hull for all flow and guard definitions of the merged states. 
  The convex hull is at least as permissive as its constituents, rendering a less permissive behavior impossible, which concludes the proof. 
\end{proof}

\inputtraceinclusion*

  \begin{proof}
  Let $\trace \in \traces$ be an arbitrary input trace.
  We first prove by induction that any subsequence of length $i$ of $\trace$ is included in the language of $\constraut_{i}$.  
  Recall that $\modemap_i(\trace)$ denotes the mode in which $\trace[0,i]$ ends.
  
  \textsc{Induction Base: $i = 0$}.  
  By construction $(x^\trace_0, \initial\mode) \in \lang{\constraut_{0}}$.
  Moreover, $\modemap_0(\trace) = \initial\mode$ marks the (start and) end point of the trace.
  
  \textsc{Induction Step: $i-1 \rightarrow i$}.  Consider $\trace[0,i] = \trace[0,i-1],\action_i,\delay_i,x_i$.
  By induction hypothesis, $\trace[0,i-1] \in \lang{\constraut_{i-1}}$.
  By \Cref{lem:lossless:construction}, the language membership carries over to $\constraut_i$.
  By construction, $\edge = (\mode_{i-1}, \action_i, \modemap_i(\trace)) \in \edges_i$ with guard 
  $\guard_i(\trace) = \guardspec(\edge^\alpha)$. 
  Here, $\guardspec(\edge^\alpha)$ is either $\top$ if the transition is not present \spec, or the condition from the respective trigger in \spec. 
  In the latter case, by construction of $\alpha_i$, the respective trigger (and thus guard) condition is satisfied in $x_{i-1}$.
  This enables the discrete transition and ensuring that the trace ends in $\modemap_i(\trace)$.
  The delay transition is valid because of the definition of $\solve$.
  Hence, $\trace \subseteq \lang{\constraut_{\card\trace}}$.
  By \Cref{lem:lossless:merge}, this result carries over to $\constraut$, \ie, $\trace \subseteq \lang{\constraut}$
 \end{proof}

\section{Additional Evaluations Results}\label{app:eval}
This section presents supplementary material for the evaluation in \Cref{sec:eval}.

\Cref{fig:suppl:dim} shows that the construction scales extremely well for an increasing dimension, both in terms of running time and memory consumption assuming the number of modes is a constant $2^{11}-1$.
The increase in running time mainly stems from the merge operation, which computes the convex hull of the dynamics for each dimension separately.
However, the impact is rather low since the running time is less than \SI{100}{\second} even for $10^4$ dimensions.
Similarly, the memory consumption increases solely owing to the necessity to store the multi-dimensional dynamics of each mode.

\Cref{fig:suppl:trace} shows the result when constructing five-dimensional automata with the same number of modes, but varying number of traces.
Each trace has a length of eleven.
Increasing the number of traces results in an increase in running time since the construction has to traverse the current automaton for each traces.
It also increases the memory consumption because traces lead to the creation of new modes.
However, the construction scales well in both performance metrics with a running time of around \SI{17}{\minute} and memory consumption of \SI{200}{\mega\byte} for 16,000 traces.

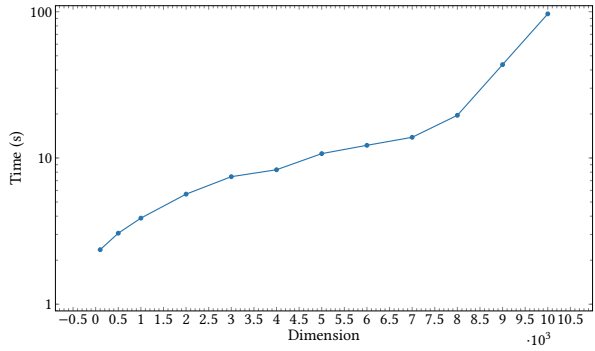
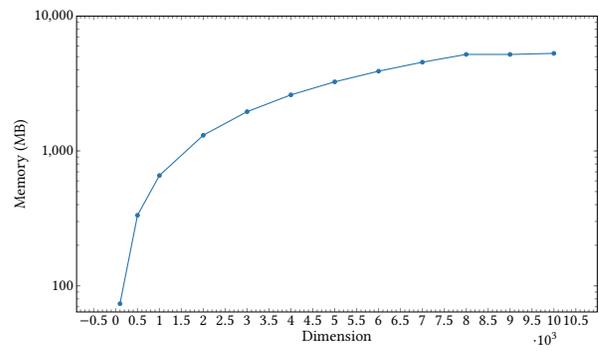
\begin{figure*}
	\begin{subfigure}[b]{.49\linewidth}
		\centering
\tikzstyle{layerpoint}=[circle,draw=loop,fill=loop,inner sep=0pt,minimum size=0.15cm]
\tikzstyle{idpoint}=[circle,draw=prefix!80!black,fill=prefix!80!black,inner sep=0pt,minimum size=0.15cm]
\pgfplotsset{compat=1.5}
\centering
\resizebox{.9\linewidth}{!}{
	\begin{tikzpicture}[font=\Huge]
		\begin{axis}[
			ymin=0.9,
			ymax=110,
			height=1.5\textwidth,
			width=2.5\textwidth,
			xlabel=Dimension,
			scaled x ticks=base 10:-3,
			minor x tick num=4,
			log ticks with fixed point,
			ylabel=Time (\SI{}\second),
			ymode=log,
			log basis y={10},
		]
		\addplot[mark=*,color=loop,very thick] coordinates {
			(100,2.36)
			(500,3.06)
			(1000,3.88)
			(2000,5.66)
			(3000,7.45)
			(4000,8.31)
			(5000,10.71)
			(6000,12.21)
			(7000,13.85)
			(8000,19.59)
			(9000,43.51)
			(10000,96.73)
		};
			
		\end{axis}




	\end{tikzpicture}
}
		\caption{Running Time
		}
		\label{fig:eval:scalability:dimension_time}
	\end{subfigure}\hfill
	\begin{subfigure}[b]{.49\linewidth}
		\centering
\tikzstyle{layerpoint}=[circle,draw=loop,fill=loop,inner sep=0pt,minimum size=0.15cm]
\tikzstyle{idpoint}=[circle,draw=prefix!80!black,fill=prefix!80!black,inner sep=0pt,minimum size=0.15cm]
\pgfplotsset{compat=1.5}
\centering
\resizebox{.9\linewidth}{!}{
	\begin{tikzpicture}[font=\Huge]
		\begin{axis}[
			ymin=64,
			ymax=10000,
			height=1.5\textwidth,
			width=2.5\textwidth,
			xlabel=Dimension,
			scaled x ticks=base 10:-3,
			minor x tick num=4,
			log ticks with fixed point,
			ylabel=Memory (\SI{}{\mega\byte}),
			ymode=log,
			log basis y={10},
		]
		\addplot[mark=*,color=loop,very thick] coordinates {
			(100,73.68)
			(500,333.512)
			(1000,657.736)
			(2000,1306.864)
			(3000,1955.964)
			(4000,2604.82)
			(5000,3253.928)
			(6000,3903.516)
			(7000,4551.976)
			(8000,5198.092)
			(9000,5194.988)
			(10000,5290.536)
		};
			
		\end{axis}




	\end{tikzpicture}
}
		\caption{Memory Consumption
		}
		\label{fig:eval:scalability:dimension_mem}
	\end{subfigure}
	\caption{Performance of the reconstruction for varying dimensions. 
	The original automaton has $2^{11}-1$ modes and each of the around 1,000 trace has length eleven.
	}
	\label{fig:suppl:dim}
\end{figure*}

\begin{figure*}
	\begin{subfigure}[b]{.49\linewidth}
		\centering
\tikzstyle{layerpoint}=[circle,draw=loop,fill=loop,inner sep=0pt,minimum size=0.15cm]
\tikzstyle{idpoint}=[circle,draw=prefix!80!black,fill=prefix!80!black,inner sep=0pt,minimum size=0.15cm]
\pgfplotsset{compat=1.5}
\centering
\resizebox{0.95\linewidth}{!}{
	\begin{tikzpicture}[font=\Huge]
		\begin{axis}[
			ymax=10000,
			height=1.5\textwidth,
			width=2.5\textwidth,
			xlabel=Number of Traces,
			scaled x ticks=base 10:-3,
			log ticks with fixed point,
			ylabel=Time (\SI{}\second),
			ymode=log,
			log basis y={10},
		]
		\addplot[mark=*,color=loop,very thick] coordinates {
			(1024,2.1)
			(2000,7.7)
			(3000,26.35)
			(4000,41.31)
			(5000,56.65)
			(6000,150.97)
			(7000,189.08)
			(8000,228.05)
			(9000,248.92)
			(10000,315.56)
			(11000,364.89)
			(12000,719.37)
			(13000,843.03)
			(14000,927.37)
			(15000,1020.91)
			(16000,1039.21)
		};
			
		\end{axis}




	\end{tikzpicture}
}
		\caption{Running Time
		}
		\label{fig:eval:scalability:traces_time}
	\end{subfigure}\hfill
	\begin{subfigure}[b]{.49\linewidth}
		\centering
\tikzstyle{layerpoint}=[circle,draw=loop,fill=loop,inner sep=0pt,minimum size=0.15cm]
\tikzstyle{idpoint}=[circle,draw=prefix!80!black,fill=prefix!80!black,inner sep=0pt,minimum size=0.15cm]
\pgfplotsset{compat=1.5}
\centering
\resizebox{0.95\linewidth}{!}{
	\begin{tikzpicture}[font=\Huge]
		\begin{axis}[
			ymin=9,
			ymax=1000,
			height=1.5\textwidth,
			width=2.5\textwidth,
			xlabel=Number of Traces,
			scaled x ticks=base 10:-3,
			log ticks with fixed point,
			ylabel=Memory (\SI{}{\mega\byte}),
			ymode=log,
			log basis y={10},
		]
		\addplot[mark=*,color=loop,very thick] coordinates {
			(1024,10.94)
			(2000,20.16)
			(3000,32.412)
			(4000,40.536)
			(5000,45.116)
			(6000,59.576)
			(7000,69.28)
			(8000,72.688)
			(9000,78.188)
			(10000,81.78)
			(11000,91.752)
			(12000,128.748)
			(13000,135.46)
			(14000,142.248)
			(15000,145.956)
			(16000,154.712)
		};
			
		\end{axis}

	\end{tikzpicture}
}
		\caption{Memory Consumption
		}
		\label{fig:eval:scalability:traces_mem}
	\end{subfigure}
	\caption{Performance of the reconstruction for a varying number of traces. 
	The original automaton has $2^{11}-1$ modes and five dimensions. Each trace has length eleven.
	}
	\label{fig:suppl:trace}
\end{figure*}
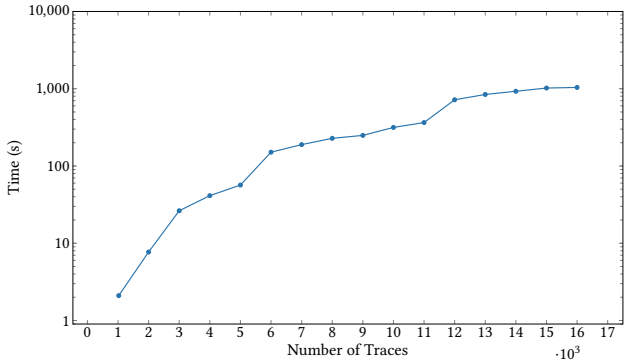
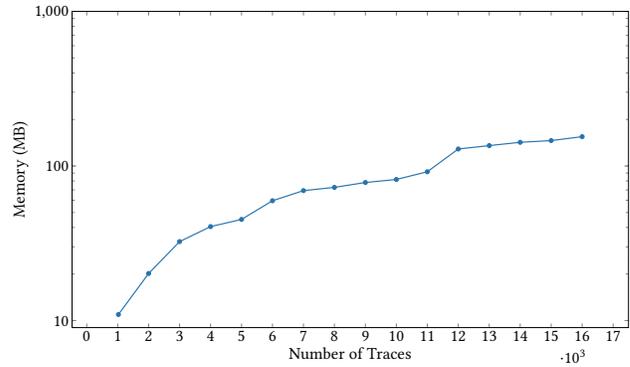
\end{appendix}

\end{document}